\long\def\ca#1\cb{} 
\newcommand{\ket}[1]{|#1\rangle}               
\newcommand{\bra}[1]{\langle #1|}              
\newcommand{\dya}[1]{\ket{#1}\!\bra{#1}}
\newcommand{\dyad}[2]{\ket{#1}\!\bra{#2}}        
\newcommand{\ip}[2]{\langle #1|#2\rangle}      
\newcommand{\Hbb}{\mathbb{H}}
\newcommand{\CC}{\mathcal{C}}
\newcommand{\EC}{\mathcal{E}}
\newcommand{\HC}{\mathcal{H}}
\newcommand{\IC}{\mathcal{I}}
\newcommand{\ZC}{\mathcal{Z}}
\newcommand{\ddd}{d}
\newcommand{\bin}{\text{bin}}
\newcommand{\sgt}{\tilde{\sigma}}
\newcommand{\rhoh}{\widehat{\rho}}
\newcommand{\Tr}{{\rm Tr}}
\renewcommand{\geq}{\geqslant}
\renewcommand{\leq}{\leqslant}
\newcommand{\mte}[2]{\langle#1|#2|#1\rangle }
\newcommand{\mted}[3]{\langle#1|#2|#3\rangle }
\newcommand{\eqprop}[2]{\stackrel{\tiny{#1}}{#2}}
\newcommand{\ot}{\otimes}
\newcommand{\ad}{^\dagger}
\newcommand*{\id}{\openone}
\newcommand{\al}{\alpha }
\newcommand{\bt}{\beta }
\newcommand{\dl}{\delta }
\newcommand{\sg}{\sigma }
\newcommand{\om}{\omega }
\newtheoremstyle{example}{\topsep}{\topsep}%
{}
{}
{\bfseries}
{.}
{   }
{\thmname{#1}\thmnumber{ #2}}
\theoremstyle{example}
\newtheorem{theorem}{Theorem}
\newtheorem{lemma}[theorem]{Lemma}
\newtheorem{proposition}[theorem]{Proposition}
\theoremstyle{definition}
\begin{document}

\title{Role of complementarity in superdense coding}

\author{Patrick J. Coles}
\affiliation{Centre for Quantum Technologies, National University of Singapore, 2 Science Drive 3, 117543 Singapore.}

\begin{abstract}
The complementarity of two observables is often captured in uncertainty relations, which quantify an inevitable tradeoff in knowledge. Here we study complementarity in the context of an information processing task: we link the complementarity of two observables to their usefulness for superdense coding (SDC). In SDC, Alice sends two classical dits of information to Bob by sending a single qudit. However, we show that encoding with commuting unitaries prevents Alice from sending more than one dit per qudit, implying that complementarity is necessary for SDC to be advantagous over a classical strategy for information transmission. When Alice encodes with products of Pauli operators for the $X$ and $Z$ bases, we quantify the complementarity of these encodings in terms of the overlap of the $X$ and $Z$ basis elements. Our main result explicitly solves for the SDC capacity as a function of the complementarity, showing that the entropy of the overlap matrix gives the capacity, when the preshared state is maximally entangled. We generalise this equation to resources with symmetric noise such as a preshared Werner state. In the most general case of arbitrary noisy resources, we obtain an analogous lower bound on the SDC capacity. Our results shed light on the role of complementarity in determining the quantum advantage in SDC and also seem fundamentally interesting since they bear a striking resemblance to uncertainty relations.
\end{abstract}

\pacs{03.67.-a, 03.67.Hk}

\maketitle

\section{Introduction}

Quantum physics has revolutionised the study of information theory, as many information-processing tasks can be better accomplished using the non-classical weirdness of quantum systems. The most famous example is the exponential speed-up for certain computing tasks that could (in principle) be obtained for a quantum computer \cite{Steane1998,NieChu00}. Another example is the in-principle perfect security offered by quantum cryptography \cite{QcryptRevModPhys.74.145,QKDRevModPhys.81.1301}. From a fundamental perspective, it is natural to ask: what aspect of quantum systems or quantum dynamics leads to the information-processing advantage? This is the well-known \textit{quantum advantage} question.

An enormous amount of research has gone into studying the role of \textit{quantum correlations}, such as entanglement \cite{HHHH09} and discord \cite{ModiEtAlRevModPhys.84.1655}, in determining the quantum advantage. While quantum correlations are indeed important and useful, they are not necessarily the whole story, since there are non-classical aspects of \textit{unipartite} quantum systems. One such notion is that of \textit{complementarity}, the idea that certain pairs of observables cannot simultaneously be known, famously stated in Heisenberg's uncertainty principle \cite{Heisenberg}. Indeed, the uncertainty principle provided inspiration for the original proposal for quantum key distribution (QKD) \cite{Wiesner}, and recently it was explicitly used to prove the security of QKD \cite{TLGR}.

While it is common to think of entanglement or discord as a resource, it is much less common to think of complementarity as a resource, perhaps because the notion seems somewhat vague. However, complementarity is naturally quantified in uncertainty relations, which are quantitative bounds on one's knowledge of (typically) pairs of observables. One of the most famous uncertainty relations is that of Maassen and Uffink \cite{MaassenUffink}, where they consider two orthonormal bases $X=\{ \ket{x_l} \}$ and $Z=\{\ket{k}\}$ on the Hilbert space $\HC_A$ of quantum system $A$ (w.l.o.g.\ we take $Z$ to be the standard basis). For any state $\rho_A$ the uncertainties, quantified by the Shannon entropies $H(X)$ and $H(Z)$, are bounded by
\begin{equation}
\label{eqn1}
H(X)+H(Z)\geq \log_2 (1/c)
\end{equation}
where
\begin{equation}
\label{eqn2}
c = \max_{k,l} c_{kl},\quad \text{with  }c_{kl}= |\ip{k}{x_l}|^2, 
\end{equation}
quantifies the complementarity between the $X$ and $Z$ observables. The complementarity factor $c$ appears in other uncertainty relations \cite{EURreview1}, including strong versions that have application for QKD security analysis \cite{RenesBoileau, BertaEtAl, TomRen2010, ColesColbeckYuZwolak2012PRL}.

In this article, we investigate the role that complementarity plays in a simple, well-known communication task called superdense coding (SDC) \cite{BennWiesPhysRevLett.69.2881}. In studying the quantum advantage it is natural to seek out the simplest imaginable tasks for which quantum systems provide an advantage over classical systems. Indeed SDC is extremely simple: by sending a \textit{single} qubit to Bob, Alice can convey \textit{two} classical bits of information. In the ideal case, Alice and Bob preshare a maximally entangled state (MES) $\ket{\phi}$ for two qubits, Alice encodes her messages by applying a Pauli operator unitary $\sg_j$ chosen from the set $\{\id, \sg_X, \sg_Y, \sg_Z\}$ to system $A$ resulting in the state $(\sg_j\ot \id)\ket{\phi}$, she sends $A$ to Bob, who can then perfectly distinguish between the four possible states by doing a measurement in the Bell basis. This simple example dramatically illustrates the quantum advantage: with a classical strategy Alice can at best send \textit{one} bit of information for each two-level system that she sends, while the quantum strategy gives \textit{two}.

So what is behind the quantum advantage in SDC? The typical answer is entanglement, and indeed there would be no advantage over a classical strategy if Alice and Bob's preshared state was disentangled \cite{BosePlenVedr2000, BowenPRA.63.022302, HiroshimaJPA2001, BrussEtAlPRL.93.210501}. But entanglement is not the only nonclassical aspect of SDC; there is also the fact that the local unitaries that Alice may implement are non-commuting. In fact, we show that if Alice's set of local unitaries is commutative, then she cannot obtain any advantage over a classical strategy, regardless of whether the preshared state with Bob is entangled. It seems that both entanglement \textit{and} complementarity are important for SDC.

We further attempt to capture the importance of complementarity in SDC by making an analogy to uncertainty relations. Notice that Alice's set of Pauli operator unitaries can be written as $\{ \sg_X^m \sg_Z^n \}$ where both $m$ and $n$ are either 0 or 1, and the phase factor relating $\sg_Y$ to $\sg_X\sg_Z$ has no consequence. Writing the unitaries like this emphasizes that the non-commutativity of the unitaries is linked to the fact that there are two complementary bases, $X$ and $Z$. Suppose we allow the angle between these two bases to vary, possibly decreasing the complementarity of the two bases, how does this affect the capacity of SDC? The extreme case in which the $X$ and $Z$ bases are fully complementary allows her to send two bits, while the other extreme corresponds to $X = Z$ so Alice's unitaries are commuting and hence she can send no more than one bit. For the intermediate cases, we find - in our main result - a fascinating connection between the SDC capacity and uncertainty relations such as \eqref{eqn1}. Alice's classical capacity $\CC(X,Z)$ as a function of the $X$ and $Z$ bases is given by:
\begin{align}
\label{eqn3aaa}\CC(X,Z) &= [1 + H_{\bin}(c)] \text{  bits}\\
\label{eqn3}&\geq [1 + \log_2 (1/c)] \text{  bits},
\end{align}
where $H_{\bin}$ is the binary entropy and the $c$ here is \emph{precisely the same factor appearing in the uncertainty relation \eqref{eqn1}}. Note that the extreme cases of $c = 1/2$ and $c = 1$ respectively give two bits and one bit for our capacity formula. We state the lower bound \eqref{eqn3} here because we can only find a simple equation like \eqref{eqn3aaa} in some special cases of high symmetry, whereas were shall generalise the bound in \eqref{eqn3} to the most general case of noisy resources (discussed below).

Operationally, one may think of the partial complementarity case as a scenario where Alice attempts to do the SDC encoding with her four unitaries but her Hadamard gate $\Hbb$, which allows her to transform from the standard basis to the $X$ basis, is faulty \footnote{Bob may need to know this in order to optimally decode her messages.}, and she implements $\sgt_X = \Hbb \sg_Z \Hbb\ad$ instead of the usual $\sg_X$. This scenario is depicted and discussed in Fig.~\ref{fgr1}. In that case, one can rewrite \eqref{eqn2} as $c = \max_{k,l} |\mted{k}{\Hbb}{l}|^2 $, so $c$ quantifies how close Alice's faulty Hadamard is to an actual Hadamard, and as long as it is fairly close then \eqref{eqn3aaa} guarantees that Alice can send information at a rate close to 2 bits.

Our work is part of a broader effort to understand the importance of complementarity for quantum information processing tasks.  Brandao and Horodecki found a relation between $\log_2 (1/c)$ and the speed-up that quantum computation obtains over classical computation in Ref.~\cite{BranHoro2013}. Coles and Piani \cite{ColesPiani2013arXiv1305.3442C} considered the task of generating entanglement with sequential measurements, which is related to decoupling and coherent teleportation, and likewise obtained bounds involving $\log_2 (1/c)$. Renes gives some nice qualitative intuition about the importance of complementarity for several tasks, including SDC, in Ref.~\cite{Renes2012arXiv1212.2379R} (see references therein for further discussion). 

In what follows, we introduce our notation and discuss the SDC protocol in the next section. We give some intuition in Sec.~\ref{sct2C}, then we briefly remark in Sec.~\ref{sct37} that non-commuting unitaries are crucial for SDC. Section~\ref{sct3UncRel} presents our main results. In general, Alice's and Bob's preshared state is a density operator $\rho_{AB}$, and after applying her unitary to $A$, Alice sends $A$ to Bob over a noisy quantum channel $\EC$.  Section~\ref{sct38abaae123} presents our exact results, i.e., equations that generalise \eqref{eqn3aaa} to arbitrary dimensions and also allow for some special types of resource noise. Sections~\ref{sct38} considers even more general preshared states $\rho_{AB}$ and noisy channels $\EC$, while presenting various generalisations of the lower bound in \eqref{eqn3}.  A summary of our results as well as possible future extensions are discussed in Sec.~\ref{sct45}.

\begin{figure}[t]
\begin{center}
\vspace{2pt}
       
\includegraphics[width=7.1cm]{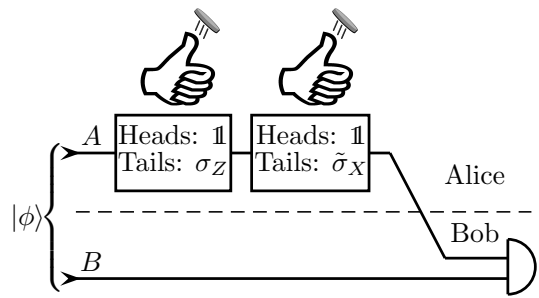}

\caption{The following is the basic idea of our main result, specialised to the case where Alice and Bob preshare a MES $\ket{\phi}$ of two qubits $A$ and $B$. Alice flips a coin to determine if she will apply $\id$ or $\sg_Z$ to $A$, and a second coin to determine if she subsequently applies $\id$ or $\sgt_X$ to $A$. She then sends $A$ to Bob (let us assume over a noiseless channel, although we later allow for a noisy channel). In the ideal case where $X$ and $Z$ are fully complementary, Bob can then measure $AB$ in the Bell basis to perfectly determine the outcome of Alice's two coin flips, i.e., Alice sends Bob two ``coins" of information. At the other extreme, when $X$ and $Z$ have zero complementarity, Alice can at best send only one coin of information per qubit that she sends. When $X$ and $Z$ have partial complementarity, as measured by $c$ (see text for definition of $c$), we find that Alice can send $1+H_{\bin}(c)$ coins of information per qubit sent to Bob, in the limit of infinitely many usages of the protocol. In other words, there is a quantitative connection between the complementarity of $X$ and $Z$ and the number of ``coins" that can be sent using superdense coding. (In general the information transmitted might not be about the individual physical coin flips, but might be about correlations, e.g., whether two flips gave the same or opposite outcomes. Hence we will measure information in the abstract unit of ``bits" rather than ``coins".)
\label{fgr1}}
\end{center}
\end{figure}

\section{Preliminaries}\label{sct2}

\subsection{Superdense coding protocol} \label{sct2B}

The SDC protocol consists of three steps:  (1) Encoding, (2) Transmission, and (3) Decoding. Or depending on one's view there are four steps with the initial one being (0) Presharing correlations. We denote the state preshared by Alice and Bob as $\rho_{AB}$, for quantum systems $A$ and $B$. While for simplicity we take $\ddd=\dim{(\HC_A)}=\dim{(\HC_B)}$, our results naturally generalise to the case of $\dim{(\HC_A)}\neq \dim{(\HC_B)}$. Throughout this article, we adopt the usual notation of $\sg_A=\Tr_B(\sg_{AB})$ and $\sg_B=\Tr_A(\sg_{AB})$ for the marginals of a bipartite state $\sg_{AB}$.

In the encoding step Alice encodes her message by applying a local unitary $U_j$ to system $A$ with probability $p_j$, resulting in the encoded state $\rho'_{AB,j}$, and the ensemble average of the encoded states is denoted $\rho'_{AB}$, i.e.,
\begin{equation}
\label{eqn4}
\rho'_{AB,j}= (U_j \ot \id)\rho_{AB}(U_j\ad \ot \id),\quad \rho'_{AB}=\sum_j p_j \rho'_{AB,j}.
\end{equation}
We note that more general local operations have been considered as encodings in, e.g., \cite{Horodecki:2001:CCN:2011339.2011345, HoroPianJPA2012,ShadEtAlNJP2010,ShadEtAlReview2013}. Another generalisation, assuming Alice and Bob preshare $N$ copies of $\rho_{AB}$, considers global unitaries applied to $A^{\ot N}$, although in several important cases no advantage is gained by applying global versus local unitaries \cite{BrussEtAlPRL.93.210501,ShadEtAlPRA2012}. For our purposes, local unitaries are sufficient for demonstrating the importance of complementarity for SDC, hence we focus on local unitary encodings.

We are particularly interested in sets of unitaries $\{U_j\}$ corresponding to products of Pauli operators, since as we shall see they are often optimal for encoding in SDC. We denote the set of all such $\ddd^2$ Pauli products as $\{\sg_X^m \sg_Z^n\}$, where $m$ and $n$ each run from 0 to $\ddd-1$ and
\begin{equation}
\label{eqnPauliDef1}
\sg_Z = \sum_{k=0}^{\ddd-1} \om^{k}\dya{k},\quad \sg_X = \sum_{k=0}^{\ddd-1} \dyad{k+1}{k},
\end{equation}
with $\om = e^{2\pi i/\ddd}$. More generally, we will allow for imperfection in Alice's Pauli operators and consider the set $\{\sgt_X^m \sg_Z^n\}$, where
\begin{equation}
\label{eqnPauliDef2}
\sgt_X = \sum_{k=0}^{\ddd-1} \om^{k}\dya{x_k}=\Hbb\sg_Z\Hbb\ad.
\end{equation}
Here we allow the imperfect Hadamard $\Hbb$ to be any unitary matrix. We have $\sgt_X= \sg_Z$ when $\Hbb$ is the identity, and $\sgt_X= \sg_X$ when $\Hbb = \sum_{k,k'} (\om^{-kk'}/\sqrt{d})\dyad{k}{k'}$ is the Fourier matrix.

Naturally, after the encoding, Alice sends $A$ to Bob. The channel $\EC$ through which $A$ passes may be noisy, in general given by a completely-positive trace-preserving map. We consider a few examples for $\EC$ in Sec.~\ref{sct3UncRel}. After passing through $\EC$ the states in the ensemble are denoted:
\begin{equation}
\label{eqn5}
\rho''_{AB,j} = (\EC\ot \IC)(\rho'_{AB,j}),\quad\rho''_{AB}=\sum_j p_j \rho''_{AB,j}.
\end{equation}
For future reference, let us also define the state in which $\EC$ acts directly on $\rho_{AB}$,
\begin{equation}
\label{eqn6}
\rhoh_{AB} = (\EC\ot \IC)(\rho_{AB}),
\end{equation}
which does not represent a state physically associated with the SDC protocol, but nonetheless appears in the mathematics in Sec.~\ref{sct3UncRel}.

Finally, Bob decodes Alice's messages by measuring the joint $AB$ system, or more generally he may do a global measurement on $A^{\ot N}B^{\ot N}$ assuming Alice and Bob repeated the first steps of SDC $N$ times. We place no restrictions on Bob's decoding.

\subsection{Superdense coding capacity} \label{sct2D}

There are several entropic measures that often show up in expressions for the classical capacity; let us define these measures. The Shannon entropy, von Neumann entropy, conditional von Neumann entropy, and relative entropy are respectively defined as 
\begin{align}
\label{eqnEntDef1}
H(\{p_j\})&= -\sum_j p_j \log p_j,\notag \\
S(\rho)&= -\Tr(\rho\log \rho),\notag \\
S(A|B)_{\sg}& = S(\sg_{AB}) - S(\sg_B), \notag \\
 D(\rho || \sg) &= \Tr(\rho \log \rho)-\Tr(\rho \log \sg),
\end{align}
where all logarithms are base 2 in this article (henceforth we drop the subscript). The binary entropy is defined as $H_{\bin}(p) = H(\{p,1-p\})$. Also, a crucial quantity for classical capacity is Holevo's $\chi$ quantity, defined for an ensemble $\{p_j, \rho_j\}$ as \cite{NieChu00}:
\begin{equation}
\label{eqnchiDef1}
\chi(\{p_j,\rho_j\}) = S \Big(\sum_j p_j \rho_j\Big) - \sum_j p_j S(\rho_j).
\end{equation}

The classical capacity is often evaluated in the asymptotic limit that Alice and Bob have $N$ copies of their systems, where $N\to \infty$, and the capacity is then defined as the optimal rate (bits per copy) that Alice can reliably send to Bob \cite{NieChu00}. In SDC, each copy received by Bob corresponds to the channel $\EC$ acting on $A$ after Alice encoded with a unitary $U_j$ on $A$ when the preshared state is $\rho_{AB}$, resulting in a state $\rho''_{AB,j}$ from \eqref{eqn5}. For a given set of unitaries $\{ U_j\}$, we naturally allow Alice to optimize over all possible probability distributions $\{p_j\}$ for these unitaries. (We assume she uses the same unitary set $\{U_j\}$ and probabilities $\{p_j \}$ for each $A$ copy.) Then the classical capacity is given by \cite{Schu97}: 
\begin{align}
\CC(\{U_j\},\rho_{AB},\EC) & = \max_{ \{p_j\} } \chi ( \{p_j, \rho''_{AB,j} \} )\notag \\
&= \max_{ \{p_j\} } [S(\rho''_{AB})-\sum_j p_j S(\rho''_{AB,j})].
\label{eqn7}
\end{align}

In the case where Alice's unitaries are unrestricted, it is natural to maximise over all possible unitary sets:
\begin{equation}
\label{eqn235CapDef}
\CC(\rho_{AB},\EC)=  \max_{ \{U_j\} } \CC(\{U_j\},\rho_{AB},\EC) 
\end{equation}
and this is often called the superdense coding capacity (for a given preshared state and noisy channel).

\section{Intuition for the quantum advantage} \label{sct2C}

The obvious strategy for Alice to send Bob classical information is for her to physically send classical bits or more generally classical dits. Since the world is quantum mechanical (or so we assume), a classical system is a particular kind of quantum system, namely a decohered one \cite{ZurekReview}. So we can speak of the different density operators $\rho_j$ of a $d$-sided classical die, each associated with a different outcome for rolling the die, where outcome $j$ has probability $p_j$. Let us imagine that Alice rolls the die and then sends it to Bob in a manner that preserves the roll outcome. So with probability $p_j$, Bob receives density operator $\rho_j$. The Holevo quantity for this protocol satisfies \cite{NieChu00}
\begin{equation}
\label{eqn52804}
\chi ( \{p_j, \rho_j \} )\leq H( \{ p_j \}) \leq \log d.
\end{equation}
More generally, we could allow the die to pass through a noisy channel on the way to Bob, but we end with the same result that $\log d$ upper bounds the Holevo quantity.  We shall refer to this as the ``classical strategy". Because the Holevo quantity in \eqref{eqn52804}, optimized over all $\{p_j\}$, represents the classical capacity for this strategy, then the capacity is upper bounded by $\log d$. In what follows, we will say that one obtains a quantum advantage if the superdense coding capacity exceeds the $\log d$ classical bound.

Eq.~\eqref{eqn52804} shows that the Holevo quantity is restricted by the size of the sample space. Hence we can now see the intuition behind superdense coding. The use of entanglement together with non-commuting unitary encodings allows Alice to effectively expand the size of the sample space, from $d$ to $d^2$. Or let us state it in the opposite way. If Alice tried to implement SDC in a classical (i.e., high decoherence) setting, then decoherence would effectively reduce the size of the sample space, from $d^2$ down to $d$, such that \eqref{eqn52804} holds. Decoherence destroys the preshared entanglement and it effectively turns the non-commuting unitaries into commuting ones. One can see the latter by taking $Z$ to be the pointer basis (the basis in which state is diagonal \cite{ZurekReview}), then $\sg_Z$ commutes with the state, $\sg_Z\rho\sg_Z\ad = \rho$, and since $\sg_X$ performs a shift in the $Z$ basis, we have $\sg_X\sg_Z \rho \sg_Z\ad\sg_X\ad = \sg_Z \sg_X\rho \sg_X\ad\sg_Z\ad$. The result is that distinct encodings become effectively equivalent. For example for Pauli product encodings we effectively get $\sg_Z = \id$ and $\sg_X\sg_Z = \sg_X$, which is what we mean by effectively contracting the sample space (from 4 to only 2 distinct encodings in the qubit case). Indeed, in the next section, we show that commuting unitaries kill any hopes to exceed the $\log d$ bound using SDC.

\section{Commuting unitaries}\label{sct37}

Here we briefly remark on the necessity of non-commuting unitaries in order to obtain an advantage over the classical strategy, i.e., where Alice simply sends Bob a classical dit with $\ddd$ possible states. Since the idea of two bases being complementary is connected to their associated Pauli operators being non-commutative, we use the notions of complementarity and non-commutativity somewhat interchangeably \footnote{After all, the original uncertainty relation from Robertson quantified complementarity using the expectation value of the commutator of the Pauli operators \cite{Robertson} (whereas in modern times we have switched to using the overlaps $c_{kl}$ to quantify complementarity).}.  Language aside, we have the following result, proved in Appendix~\ref{app0}.
\begin{proposition}
\label{prop13454}
Let $\rho_{AB}$ be an arbitrary quantum state, let $\EC$ be an arbitrary quantum channel, and let $\{ U_j \}$ be any set of commuting unitaries, then  
\begin{equation}
\label{eqn8}
\CC(\{U_j\}, \rho_{AB},\EC) \leq \log \ddd. 
\end{equation}
\end{proposition}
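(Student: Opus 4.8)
The plan is to reduce the statement to the noiseless-channel case and then exploit the fact that a commuting family of unitaries can be simultaneously diagonalised, so that the ensemble-averaging map behaves like a ``dephasing''-type channel whose environment has dimension only $\ddd$. First I would dispose of the channel $\EC$. Recalling that the Holevo quantity $\chi(\{p_j,\rho_j\})$ equals the quantum mutual information $I(J;Q)$ of the classical--quantum state $\sum_j p_j\dya{j}_J\ot\rho_j^Q$, and that this is nonincreasing under any channel applied to the quantum part, applying $\EC\ot\IC$ to each $\rho'_{AB,j}$ to produce $\rho''_{AB,j}$ can only decrease $\chi$. Hence $\CC(\{U_j\},\rho_{AB},\EC)=\max_{\{p_j\}}\chi(\{p_j,\rho''_{AB,j}\})\leq \max_{\{p_j\}}\chi(\{p_j,\rho'_{AB,j}\})$, and it is enough to bound the right-hand side by $\log\ddd$.

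Next I would rewrite this Holevo quantity. Since each $\rho'_{AB,j}=(U_j\ot\id)\rho_{AB}(U_j\ad\ot\id)$ is a unitary conjugate of $\rho_{AB}$, we have $S(\rho'_{AB,j})=S(\rho_{AB})$, so $\chi(\{p_j,\rho'_{AB,j}\})=S(\rho'_{AB})-S(\rho_{AB})$ with $\rho'_{AB}=\sum_j p_j\rho'_{AB,j}$. Now I would use commutativity: choose an orthonormal basis $\{\ket{v_k}\}$ of $\HC_A$ in which every $U_j$ is diagonal, $U_j=\sum_k e^{i\phi_{jk}}\dya{v_k}$. In this basis $\rho'_{AB}$ is obtained from $\rho_{AB}$ by multiplying the $A$-index matrix entries by $g_{kl}=\sum_j p_j e^{i(\phi_{jk}-\phi_{jl})}$; the matrix $G=(g_{kl})=\sum_j p_j\dya{w_j}$ (with $\ket{w_j}=\sum_k e^{i\phi_{jk}}\ket{k}$) is positive semidefinite with unit diagonal, so the entrywise (Schur) multiplier $\Phi_G\colon X\mapsto G\circ X$ is a unital trace-preserving completely positive map on $A$ whose Kraus operators are diagonal in $\{\ket{v_k}\}$ and whose Kraus rank is at most $\mathrm{rank}(G)\leq\ddd$. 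Thus $\rho'_{AB}=(\Phi_G\ot\IC_B)(\rho_{AB})$.

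Then I would dilate $\Phi_G$ by a Stinespring isometry $V\colon\HC_A\to\HC_A\ot\HC_E$ with $\dim\HC_E\leq\ddd$, setting $\Xi_{AEB}=(V\ot\id_B)\rho_{AB}(V\ad\ot\id_B)$, so that $S(\Xi_{AEB})=S(\rho_{AB})$ and $\Tr_E\Xi_{AEB}=\rho'_{AB}$. The Araki--Lieb triangle inequality gives $S(\rho'_{AB})=S(\Xi_{AB})\leq S(\Xi_{AEB})+S(\Xi_E)\leq S(\rho_{AB})+\log\ddd$, whence $\chi(\{p_j,\rho'_{AB,j}\})\leq\log\ddd$ for every $\{p_j\}$. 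Combining with the first step yields $\CC(\{U_j\},\rho_{AB},\EC)\leq\log\ddd$, as claimed.

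The main obstacle is the structural observation in the second step: one must recognise that commutativity forces the averaging map $\Phi_G$ to be a ``diagonal'' Schur-multiplier channel, whose dilation needs an environment of dimension only $\ddd$ rather than the generic $\ddd^2$. Without this, the triangle-inequality bound in the third step would only give the useless estimate $2\log\ddd$. The remaining ingredients --- monotonicity of the Holevo/mutual-information quantity under channels, unitary invariance of von Neumann entropy, and the Araki--Lieb inequality --- are standard and I expect their assembly to be routine.
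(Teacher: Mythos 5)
Your proof is correct, and after the shared opening moves (Holevo monotonicity to strip off $\EC$, unitary invariance of entropy to reduce $\chi$ to $S(\rho'_{AB})-S(\rho_{AB})$, and simultaneous diagonalisation of the commuting $U_j$) it diverges from the paper's argument in how it bounds the entropy gain by $\log\ddd$. The paper observes that the dephasing channel $\ZC$ in the common eigenbasis absorbs the encoding, $(\ZC\ot\IC)(\rho'_{AB,j})=(\ZC\ot\IC)(\rho_{AB})$, so that $S(\rho'_{AB})\leq S((\ZC\ot\IC)(\rho_{AB}))$; it then purifies $\rho_{AB}$ to $\rho_{ABC}$, swaps $B$ for $C$ using the classical--quantum structure of the dephased state, and applies subadditivity to peel off $S(\ZC(\rho_A))\leq\log\ddd$. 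You instead identify the averaged encoding map as a Schur-multiplier channel $\Phi_G$ whose Choi rank is at most $\mathrm{rank}(G)\leq\ddd$, dilate it with a $\ddd$-dimensional environment, and invoke Araki--Lieb to get $S(\rho'_{AB})\leq S(\rho_{AB})+\log\ddd$ directly. Your route isolates the cleaner general principle --- the entropy gain of a local channel is at most the log of its minimal environment dimension, and commutativity is exactly what collapses that dimension from $\ddd^2$ to $\ddd$ --- whereas the paper's route stays closer to elementary entropy inequalities and, via the intermediate bound $S(\rho'_{AB})\leq S((\ZC\ot\IC)(\rho_{AB}))$, makes explicit the ``decoherence contracts the sample space'' intuition it develops in Sec.~\ref{sct2C}. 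Both are complete; the only cosmetic quibble with yours is the notational slip between the eigenbasis $\{\ket{v_k}\}$ and the index vectors $\ket{k}$ used to define $\ket{w_j}$, which does not affect the argument.
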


The bound in \eqref{eqn8} shows that, no matter how entangled $\rho_{AB}$ is, the SDC protocol offers no advantage over the classical strategy when Alice's unitaries are commuting. In contrast to \eqref{eqn8}, when Alice's unitaries are unrestricted, she can achieve a rate of $2 \log \ddd$ when $\rho_{AB}$ is maximally entangled and $\EC$ is noiseless.

We note that Prop.~\ref{prop13454} lends itself to \textit{witnessing complementarity} using SDC in the following sense. If Alice has a black box that implements some unitary randomly chosen from a set of unitaries, then she can witness the non-commutativity of the unitary set by verifying that she can achieve a quantum advantage (i.e., exceed the classical bound $\log \ddd$) using the box.

\section{Quantitative dependence on complementarity}\label{sct3UncRel}

\subsection{Introduction}\label{sct37.9}

We now present our main results, which give quantitative connections between complementarity and SDC capacity. To formulate them, we suppose that Alice encodes her messages with the Pauli product unitaries $\{ \sgt_X^m\sg_Z^n \}$ (see Sec.~\ref{sct2B}). We then quantify the complementarity of these encodings by considering the ``overlap matrix" $c_{kl}$ defined in \eqref{eqn2}. In the qubit case ($d=2$), a single parameter $c=\max_{k,l} c_{kl}$ characterises the whole matrix, and therefore we obtain a simple dependence of the SDC capacity on $c$. For $d>2$, we find that the SDC capacity depends on the entire matrix $c_{kl}$. Nevertheless we are able to state the exact dependence on $c_{kl}$ for any $d$, so long as the preshared state $\rho_{AB}$ is a Werner state and $\EC$ is a depolarising channel. We state this exact result in Sec.~\ref{sct38abaae123}.

For more general types of noisy resources, $\rho_{AB}$ and $\EC$, we state our results as lower bounds on the SDC capacity in terms of $\log(1/c)$. As noted in the Introduction, this is same factor appearing in the Maassen-Uffink uncertainty relation. Section~\ref{sct382z} presents our lower bounds for some special resource choices, while Sec.~\ref{sct42} generalises these bounds to the most general case of arbitrary $\rho_{AB}$ and $\EC$. Finally Sec.~\ref{sct3345} notes that one can obtain stronger bounds (for $d>2$) if one considers more elements of the matrix $c_{kl}$ than just $c$.

While Prop.~\ref{prop13454} showed that complementarity is \textit{necessary} to achieve a quantum advantage with SDC, the following results allow us to also argue that complementarity is \textit{sufficient} for a quantum advantage. This is especially true in the case of a maximally entangled preshared state, which we discuss now, where the relationship between complementarity and the quantum advantage is very clear and precise.

\subsection{Equations for special cases}\label{sct38abaae123}

\subsubsection{MES, noiseless channel}\label{sct38abaae1}

Let us consider the special case where Alice and Bob preshare a MES $\ket{\phi}$ and the channel $\EC$ is noiseless, denoted as $\EC = \IC$. In this case, we have a simple and precise dependence of the SDC capacity on the elements of the overlap matrix $c_{kl}$, as follows.

\begin{theorem}
\label{thm56aabas}
Let $\ket{\phi}\in \HC_{AB}$ be any MES, then
\begin{align}
\label{eqn15MES} \CC(\{\sgt_X^m \sg_Z^n\},\dya{\phi},\IC)= H(\{c_{kl}/d\}) 
\end{align}
where $c_{kl}= |\ip{k}{x_l}|^2= |\mted{k}{\Hbb}{l}|^2$.
\end{theorem}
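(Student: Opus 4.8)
The plan is to compute the Holevo quantity in \eqref{eqn7} directly for the ensemble $\{p_{mn}, \rho''_{AB,mn}\}$ with $\rho''_{AB,mn} = (\sgt_X^m\sg_Z^n\ot\id)\dya{\phi}(\sgt_X^m\sg_Z^n{}\ad\ot\id)$, and then maximise over $\{p_{mn}\}$. The first observation is that since $\ket{\phi}$ is a MES, each encoded state $\rho''_{AB,mn}$ is pure, so $S(\rho''_{AB,mn})=0$ and the Holevo quantity collapses to $\chi(\{p_{mn},\rho''_{AB,mn}\}) = S(\sum_{mn}p_{mn}\rho''_{AB,mn})$. Hence $\CC = \max_{\{p_{mn}\}} S(\rho''_{AB})$, and the task reduces to understanding the spectrum of the averaged state $\rho''_{AB} = \sum_{mn} p_{mn}(\sgt_X^m\sg_Z^n\ot\id)\dya{\phi}(\sgt_X^m\sg_Z^n{}\ad\ot\id)$.

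Next I would use the standard MES identity $(M\ot\id)\ket{\phi} = (\id\ot M\trp)\ket{\phi}$ (up to the fixed unitary relating $\ket{\phi}$ to the canonical $\frac1{\sqrt d}\sum_k\ket{kk}$, which can be absorbed into $B$ and does not affect entropies) to move all of Alice's operators onto Bob's side, and then compute the pairwise overlaps $\langle\phi|(\sgt_X^{m'}\sg_Z^{n'})\ad\sgt_X^m\sg_Z^n\ot\id|\phi\rangle = \frac1d\Tr[(\sgt_X^{m'}\sg_Z^{n'})\ad\sgt_X^m\sg_Z^n]$. The key structural fact is that the $d^2$ encoded pure states $\ket{\psi_{mn}} := (\sgt_X^m\sg_Z^n\ot\id)\ket{\phi}$ form a (generally non-orthogonal) family whose Gram matrix is determined entirely by $c_{kl}$; more usefully, I expect the states $\{\ket{\psi_{mn}}\}$ to resolve the identity nicely under a flat distribution and, more to the point, that the averaged state $\rho''_{AB}$ with the \emph{optimal} $\{p_{mn}\}$ turns out to be diagonalised in a basis whose eigenvalues are exactly the numbers $c_{kl}/d$. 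Concretely, I would compute $\rho''_{AB}$ in the Schmidt basis of $\ket{\phi}$: writing $\ket{\phi}=\frac1{\sqrt d}\sum_k\ket{k}\ket{k}$, one gets matrix elements of $\rho''_{AB}$ in terms of $\langle k|\sgt_X^m\sg_Z^n|k'\rangle$, and the phase structure of $\sg_Z^n$ (a sum of $d$-th roots of unity over $n$) together with summing $n$ against the probabilities should force a block/diagonal structure; then summing $m$ against probabilities and using $\sgt_X=\Hbb\sg_Z\Hbb\ad = \sum_k\om^k\dya{x_k}$ brings in the $|\ip{k}{x_l}|^2 = c_{kl}$ overlaps.

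The main obstacle, and where I would spend the most care, is \textbf{showing the uniform distribution $p_{mn}=1/d^2$ is optimal} and that it produces the claimed eigenvalues. For $\EC=\IC$ and a MES this is plausible by a symmetry/concavity argument: the map $\{p_{mn}\}\mapsto S(\rho''_{AB})$ is concave, and the group generated by $\sg_X,\sg_Z$ acting by conjugation permutes the ensemble, so averaging over this group (which takes any $\{p_{mn}\}$ to the uniform one while only increasing entropy by concavity) shows uniform is optimal — provided the relevant symmetry group acts transitively enough on the index set, which here it does since $\{\sg_X^m\sg_Z^n\}$ is (projectively) a group. Once uniformity is established, the remaining step is the eigenvalue computation: I anticipate that $\rho''_{AB}$ at $p_{mn}=1/d^2$ equals $\sum_{k,l}(c_{kl}/d)\,\dya{e_{kl}}$ for an orthonormal family $\{\ket{e_{kl}}\}$ built from $\ket{k}_B$ and $\ket{x_l}$-type vectors on $A$, giving $S(\rho''_{AB}) = H(\{c_{kl}/d\})$ and hence the theorem. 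A sanity check: $\sum_{kl}c_{kl}/d = \sum_l (\sum_k c_{kl})/d = \sum_l 1/d = 1$ (each column of $c_{kl}$ is a probability vector), so the claimed spectrum is correctly normalised, and for $d=2$, $\{c_{kl}/2\} = \{c/2, (1-c)/2, (1-c)/2, c/2\}$ gives $H = 1 + H_{\bin}(c)$, matching \eqref{eqn3aaa}.
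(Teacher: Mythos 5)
Your proposal follows essentially the same route as the paper: purity of the encoded states collapses the Holevo quantity to $S(\rho''_{AB})$; uniform probabilities are shown optimal by a symmetry-plus-concavity argument (the paper's Lemma~\ref{thmEqProbs}); and the uniform average is diagonalised with eigenvalues $c_{kl}/d$ by exploiting the fact that a uniform average over powers of a Pauli operator is a dephasing map in its eigenbasis (Eq.~\eqref{eqn1345683} of the paper). Your sanity checks on normalisation and the $d=2$ reduction are also correct.

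One step, as literally written, would fail and needs the repair the paper uses. You justify transitivity of the symmetry action by saying that ``$\{\sg_X^m\sg_Z^n\}$ is (projectively) a group.'' That is true for the perfect Pauli operators, but for the imperfect $\sgt_X=\Hbb\sg_Z\Hbb\ad$ with generic $\Hbb$ the Weyl relation $\sg_Z\sgt_X=\om\,\sgt_X\sg_Z$ does \emph{not} hold, so $\sgt_X^p\sg_Z^q\cdot\sgt_X^m\sg_Z^n$ is not proportional to $\sgt_X^{m+p}\sg_Z^{n+q}$ and conjugation on $A$ alone does not permute the ensemble. The paper circumvents this (Appendix~\ref{app1}) by first using the MES identity to move the $\sg_Z^n$ factor onto Bob's side, so that the encoded state carries $\sgt_X^m$ on $A$ and $\sg_Z^n$ on $B$ separately; conjugating by $(\sgt_X^p\ot\sg_Z^q)\ad$ then shifts $(m,n)\mapsto(m-p,n-q)$ without ever commuting $\sg_Z$ past $\sgt_X$. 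Since you already invoke the MES transpose identity elsewhere, the fix is available to you, but the optimality argument should be routed through it rather than through a group structure that the set $\{\sgt_X^m\sg_Z^n\}$ does not possess.
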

We will prove \eqref{eqn15MES} below when we consider a more general case in Thm.~\ref{thmWerDep1}. The formula in \eqref{eqn15MES} is beautifully simple, stating that the SDC capacity is simply given by the entropy of the (normalised) overlap matrix. It is interesting that every matrix element $c_{kl}$ enters on equal footing into the formula in \eqref{eqn15MES}.

We can rewrite \eqref{eqn15MES} in a way that clearly shows the term contributing to the quantum advantage, as follows:

\begin{align}
\label{eqn15MESa} 
\CC(\{\sgt_X^m \sg_Z^n\},\dya{\phi},\IC)= \log d+ \sum_{k} (1/d) H( \{c_{kl}\}_l  )
\end{align}
where $H( \{c_{kl}\}_l  ) = H(X)_{\ket{k}}$ is the entropy of row $k$ of the matrix $c_{kl}$ \footnote{For clarity we may add a subscript on the set, as in $\{c_{kl}\}_l$, to indicate the running index.}. Equation~\eqref{eqn15MESa} says that the quantum advantage is given by the average entropy of the rows of $c_{kl}$ (or equivalently the average over the columns). This equation directly implies the following observation.

\begin{proposition}
\label{thm56aaswbas}
Let $\ket{\phi}\in \HC_{AB}$ be any MES. Then
\begin{align}
\label{eqn15MESpropa} \CC(\{\sgt_X^m \sg_Z^n\},\dya{\phi},\IC)= 2\log d 
\end{align}
if and only if $\Hbb$ is a Hadamard matrix, i.e., of the form
\begin{equation}
\label{eqn4928}
\Hbb =\sum_{k,l} (e^{i \phi_{k,l}}/\sqrt{d})\dyad{k}{l}
\end{equation}
for some phase factors $e^{i \phi_{k,l}}$ chosen so that $\Hbb$ is unitary. At the other extreme, we have
\begin{align}
\label{eqn15MESpropb} \CC(\{\sgt_X^m \sg_Z^n\},\dya{\phi},\IC)= \log d 
\end{align}
if and only if $\Hbb$ is a trivial relabelling of the standard basis with phases applied, i.e., of the form 
\begin{equation}
\label{eqn4929}
 \Hbb = \sum_k e^{i\phi_k}\dyad{P(k)}{k}
\end{equation}
where the $e^{i\phi_k}$ are arbitrary phase factors and $P()$ is a permutation function. 
\end{proposition}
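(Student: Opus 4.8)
The plan is to deduce Proposition~\ref{thm56aaswbas} directly from the rewritten capacity formula \eqref{eqn15MESa}, which expresses the capacity as $\log d$ plus the average row entropy $\frac1d\sum_k H(\{c_{kl}\}_l)$. Since each row $\{c_{kl}\}_l$ is a probability distribution (the row sums of $c_{kl}$ are $1$ because $\sgt_X$ is unitary, equivalently because $\sum_l |\ip{k}{x_l}|^2 = \|\ket{k}\|^2 = 1$), each term $H(\{c_{kl}\}_l)$ lies in $[0,\log d]$. Hence the average lies in $[0,\log d]$, and the capacity lies in $[\log d, 2\log d]$. The two extremes are attained exactly when every row entropy is maximal, respectively minimal, so the whole proposition reduces to characterising those two boundary cases and translating them into structural statements about $\Hbb$.

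First I would handle the upper extreme \eqref{eqn15MESpropa}. The average row entropy equals its maximum $\log d$ iff $H(\{c_{kl}\}_l) = \log d$ for every $k$, which by strict concavity of Shannon entropy on the simplex holds iff each row is the uniform distribution, i.e. $c_{kl} = 1/d$ for all $k,l$. Translating back, $|\mted{k}{\Hbb}{l}|^2 = 1/d$ for all $k,l$, which is precisely the statement that every matrix element of $\Hbb$ has modulus $1/\sqrt d$; writing $\mted{k}{\Hbb}{l} = e^{i\phi_{k,l}}/\sqrt d$ gives the form \eqref{eqn4928}. Conversely, if $\Hbb$ has the form \eqref{eqn4928} then $c_{kl} = 1/d$ identically, every row entropy is $\log d$, and the capacity is $2\log d$. (Unitarity of such an $\Hbb$ is exactly the condition that the phases form a complex Hadamard matrix; this is the extra clause ``chosen so that $\Hbb$ is unitary'' and needs no further comment.)

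Next I would handle the lower extreme \eqref{eqn15MESpropb}. The average row entropy equals its minimum $0$ iff $H(\{c_{kl}\}_l) = 0$ for every $k$, i.e. each row is a point mass: for each $k$ there is a unique $l = P(k)$ with $c_{k,P(k)} = 1$ and $c_{kl} = 0$ otherwise. Equivalently $|\mted{k}{\Hbb}{l}| = \delta_{l,P(k)}$, so $\Hbb\ket{l} $ is proportional to $\ket{P^{-1}(l)}$ for each $l$ — in particular $P$ must be a bijection, since $\Hbb$ is invertible and maps basis vectors to (multiples of) basis vectors — and collecting phases gives $\Hbb = \sum_k e^{i\phi_k}\dyad{P(k)}{k}$, the form \eqref{eqn4929}. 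Conversely, any $\Hbb$ of the form \eqref{eqn4929} is unitary, gives $c_{kl} = \delta_{l,P^{-1}(k)}$ with zero row entropies, and hence capacity $\log d$.

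**Main obstacle.** There is no deep obstacle: once \eqref{eqn15MESa} is granted, everything is elementary. The only points demanding a little care are (i) invoking \emph{strict} concavity of the entropy to get the ``only if'' directions — i.e. that \emph{maximal} average entropy forces \emph{every} row to be exactly uniform, not just on average, and likewise \emph{zero} average forces every row to be a point mass — and (ii) noting that in the lower-extreme case the map $k \mapsto P(k)$ is automatically a permutation, which follows from $\Hbb$ being a bijection sending each standard basis vector to a scalar multiple of a standard basis vector. Both are one-line observations, so the proof is short.
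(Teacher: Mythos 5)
Your argument is correct and is exactly the route the paper intends: the paper offers no separate proof of Prop.~\ref{thm56aaswbas}, stating only that it ``directly'' follows from Eq.~\eqref{eqn15MESa}, and your write-up simply makes explicit the extremal analysis of the average row entropy (uniform rows iff $2\log d$, point-mass rows iff $\log d$) that this remark presupposes. The two care points you flag (strict concavity for the ``only if'' directions, and $P$ being forced to be a permutation) are handled correctly.
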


This proposition provides the precise statement that allows us to argue that complementarity is both necessary and sufficient to achieve a quantum advantage in SDC. It is natural to say that the $\Hbb$ in \eqref{eqn4928} and \eqref{eqn4929} corresponds to the cases of full complementarity and zero complementarity, respectively. Hence, Prop.~\ref{thm56aaswbas} says that the SDC capacity is 1 dit if and only if the encodings have zero complementarity, and it is 2 dits if and only if the encodings have full complementarity.

Specialising \eqref{eqn15MESa} to the case of $d=2$ gives
\begin{align}
\label{eqn15qubit} \CC(\{\sgt_X^m \sg_Z^n\},\dya{\phi},\IC)= 1+H_{\bin}(c).
\end{align}
Hence the SDC capacity depends only on a single parameter $c$; we plot this dependence in Fig.~\ref{fgr2} (the Noise~$=0$ curve). We can see that this plot is consistent with Prop.~\ref{thm56aaswbas}, i.e., the capacity is 2 bits iff $c=1/2$ and 1 bit iff $c=1$.   

We show below (Lemma~\ref{thmEqProbs}) that the optimal strategy - to achieve the rate in \eqref{eqn15qubit} - is for Alice to encode with equal probabilities ($p_j = 1/4, \forall j$), regardless of the value of $c$. We can think of this encoding strategy as basically two sequential coin flips, and hence this justifies our depiction in Fig.~\ref{fgr1} of the Introduction.

\subsubsection{Werner state, depolarising channel}\label{sct38abaae2}

\begin{figure}[t]
\begin{center}
\includegraphics[width=8.2cm]{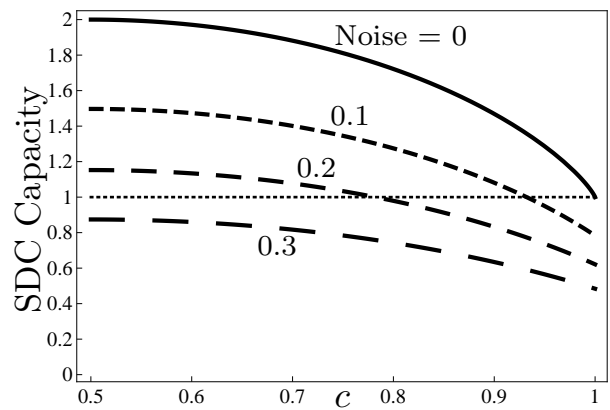}
\caption{Plot of $\CC(\{\sgt_X^m \sg_Z^n\},\rho_{AB}^{\al},\EC_d^{\bt})$ as a function of $c$ for $d=2$. Different curves correspond to different values of noise, given by $(1-\al\bt)$. The region above the horizontal dotted line (capacity equal to 1 bit) corresponds to a quantum advantage.
\label{fgr2}}
\end{center}
\end{figure}

The above result generalises to a noisy scenario where Alice and Bob preshare a Werner state and $A$ is transmitted over a depolarising channel. Werner states are a weighted mixture of a MES $\dya{\phi}$ with the maximally mixed state, of the form:
\begin{equation}
\label{eqn24}
\rho^{\al}_{AB} = \al \dya{\phi} +(1-\al) \id / \ddd^2
\end{equation}
where $0\leq \al \leq 1$. Similarly the action of a depolarising channel \cite{NieChu00} on a normalised density operator is:
\begin{equation}
\label{eqn16}
\EC^{\bt}_d(\rho_A) = \bt \rho_A +(1-\bt )\id/\ddd,
\end{equation}
which interpolates between the noiseless channel and the completely noisy channel as $\bt$ goes from 1 to 0. Notice that $\rho^{\al}_{AB} = (\EC^{\al}_d\ot \IC)(\dya{\phi})$. Also notice that the composition of two depolarising channels is a more noisy depolarising channel: $\EC^{\bt}_d\circ \EC^{\al}_d = \EC^{\al \bt}_d$. Thus, for a depolarising channel acting on a Werner state, $(\EC^{\bt}_d\ot \IC) (\rho^{\al}_{AB})=(\EC^{\al\bt}_d \ot \IC)(\dya{\phi})$ and the noise enters in as a single parameter: $(1-\al\bt)$.

The following lemma shows that we only need to consider the equal probability case when evaluating the capacity for these states and channels, with the proof given in Appendix~\ref{app1} \footnote{A key idea for the proof of Lemma~\ref{thmEqProbs} was kindly provided by J. Kaniewski.}.
\begin{lemma}
\label{thmEqProbs}
Let $\rho^{\al}_{AB}$ be a Werner state as defined in \eqref{eqn24}, let $\EC^{\bt}_d$ be defined by \eqref{eqn16}, then
\begin{align}
\label{eqnWerDepEqProb} \CC(\{\sgt_X^m \sg_Z^n\}, \rho^{\al}_{AB}, \EC^{\bt}_d)= \chi ( \{1/d^2, \rho''_{AB,j} \} ). 
\end{align}
In other words, choosing equal probabilities $p_j = 1/d^2, \forall j,$ is optimal.
\end{lemma}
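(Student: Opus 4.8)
The plan is to reduce the optimization over $\{p_j\}$ to a concavity‑plus‑symmetry argument. First I would strip the noise off using the unitary covariance of the depolarising channel, $\EC^\bt_d(U\sg U\ad)=U\,\EC^\bt_d(\sg)\,U\ad$: applying it with $U=\sgt_X^m\sg_Z^n$ on system $A$ gives
\begin{align*}
\rho''_{AB,j}&=(\EC^\bt_d\ot\IC)\big((U_j\ot\id)\rho^\al_{AB}(U_j\ad\ot\id)\big)\\
&=(U_j\ot\id)\,\rhoh_{AB}\,(U_j\ad\ot\id),
\end{align*}
where $\rhoh_{AB}=(\EC^\bt_d\ot\IC)(\rho^\al_{AB})=\rho^{\al\bt}_{AB}$ (using $\EC^\bt_d\circ\EC^\al_d=\EC^{\al\bt}_d$). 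Thus each $\rho''_{AB,j}$ is just a unitary rotation of the single fixed state $\rho^{\al\bt}_{AB}$, so $S(\rho''_{AB,j})=S(\rho^{\al\bt}_{AB})$ for every $j$ and hence $\sum_j p_j S(\rho''_{AB,j})=S(\rho^{\al\bt}_{AB})$ is independent of $\{p_j\}$. Consequently, maximizing $\chi(\{p_j,\rho''_{AB,j}\})$ over $\{p_j\}$ is equivalent to maximizing the single entropy $S(\rho''_{AB})=S\big(\sum_{m,n}p_{m,n}M_{m,n}\big)$, where I write $M_{m,n}:=(\sgt_X^m\sg_Z^n\ot\id)\,\rho^{\al\bt}_{AB}\,(\sg_Z^{-n}\sgt_X^{-m}\ot\id)$.

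Next I would build a symmetry group under which the $M_{m,n}$ are permuted by unitary conjugation. The claim is that the translation group $G=\mathbb{Z}_d\times\mathbb{Z}_d$, acting on the index set by $(m,n)\mapsto(m+a,n+b)$, is implemented by unitaries $V_{a,b}=\sgt_X^a\ot R_b$, i.e.\ $V_{a,b}M_{m,n}V_{a,b}\ad=M_{m+a,n+b}$. The $X$‑shift is immediate: conjugating $M_{m,n}$ by $\sgt_X^a\ot\id$ merely sends $\sgt_X^m\mapsto\sgt_X^{m+a}$. For the $Z$‑shift I would use that $\ket\phi$ is maximally entangled: there is a unitary $R_b$ on $B$ with $(\sg_Z^b\ot\id)\ket\phi=(\id\ot R_b)\ket\phi$ (the transpose trick), and since the maximally mixed component of $\rho^{\al\bt}_{AB}$ is invariant under every conjugation, $(\sg_Z^b\ot\id)\rho^{\al\bt}_{AB}(\sg_Z^{-b}\ot\id)=(\id\ot R_b)\rho^{\al\bt}_{AB}(\id\ot R_b\ad)$; commuting this $B$‑side factor past $\sgt_X^m\ot\id$ gives $M_{m,n+b}=(\id\ot R_b)M_{m,n}(\id\ot R_b\ad)$. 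Combining the two shifts yields the claim, and $G$ plainly acts transitively (in fact regularly) on the $d^2$ indices.

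With the group in hand I would finish by averaging: for any $\{p_{m,n}\}$, concavity of the von Neumann entropy together with its unitary invariance gives
\begin{align*}
S\!\Big(\tfrac{1}{d^2}\sum_{m,n}M_{m,n}\Big)
&=S\!\Big(\tfrac{1}{d^2}\sum_{a,b}V_{a,b}\,\rho''_{AB}\,V_{a,b}\ad\Big)\\
&\geq\tfrac{1}{d^2}\sum_{a,b}S\!\big(V_{a,b}\,\rho''_{AB}\,V_{a,b}\ad\big)=S(\rho''_{AB}),
\end{align*}
where the first equality uses transitivity of $G$ to turn the group average of $\rho''_{AB}$ into the uniform mixture $\tfrac{1}{d^2}\sum_{m,n}M_{m,n}$. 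So $\{p_j=1/d^2\}$ maximizes $S(\rho''_{AB})$ and hence, by the first paragraph, also $\chi$, which is exactly \eqref{eqnWerDepEqProb}. I expect the middle step to be the main obstacle: for a general imperfect Hadamard $\Hbb$ the set $\{\sgt_X^m\sg_Z^n\}$ is \emph{not} a (projective) group — the Heisenberg--Weyl commutation relation between $\sgt_X$ and $\sg_Z$ fails — so one cannot implement the $Z$‑shift by pre/post‑multiplying on $A$ alone; the fix is precisely to push the offending $\sg_Z^b$ onto system $B$ via the maximal‑entanglement structure, where it no longer disturbs the $\sgt_X^m$ factors. Everything else is routine covariance and concavity bookkeeping.
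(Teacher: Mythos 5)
Your proposal is correct and follows essentially the same route as the paper's proof in Appendix~B: both strip the depolarising noise via covariance, note that $S(\rho''_{AB,j})$ is constant in $j$, push the $\sg_Z$ shift onto system $B$ via the transpose trick for the maximally entangled component (with the maximally mixed part invariant), and then conclude by averaging over the resulting transitive set of index-shifting unitaries using concavity of the von Neumann entropy. Your explicit identification of why the shift must be implemented on $B$ (the set $\{\sgt_X^m\sg_Z^n\}$ is not a projective group for a general imperfect Hadamard) matches the implicit logic of the paper's derivation.
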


Now let us generalise Thm.~\ref{thm56aabas} to a noisy case of Werner states and depolarising channels, with the proof in Appendix~\ref{app2}.

\begin{theorem}
\label{thmWerDep1}
Let $\rho^{\al}_{AB}$ be a Werner state as defined in \eqref{eqn24}, let $\EC^{\bt}_d$ be defined by \eqref{eqn16}, then
\begin{align}
\label{eqn15WerDep} \CC(\{\sgt_X^m \sg_Z^n\}, \rho^{\al}_{AB}, \EC^{\bt}_d)&= H(\{ \frac{\al\bt c_{kl}}{d} +\frac{1-\al\bt }{d^2} \}_{kl} )\notag\\
&\hspace{10pt} - S(\rhoh_{AB}) , 
\end{align}
where $\rhoh_{AB} = (\EC^{\bt}_d \ot \IC)(\rho^{\al}_{AB})$. 
\end{theorem}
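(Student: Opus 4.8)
The plan is to collapse the capacity onto a single Holevo quantity with uniform priors and then diagonalise the two density operators that appear, reading off the entropy directly. First I would invoke Lemma~\ref{thmEqProbs}, which gives $\CC(\{\sgt_X^m\sg_Z^n\},\rho^\al_{AB},\EC^\bt_d)=\chi(\{1/d^2,\rho''_{AB,j}\})=S(\rho''_{AB})-\frac{1}{d^2}\sum_j S(\rho''_{AB,j})$, the index $j$ running over the $d^2$ pairs $(m,n)$ with $U_j=\sgt_X^m\sg_Z^n$. The subtracted term is handled by unitary covariance of the depolarising channel: since $(\EC^\bt_d\ot\IC)\big((U\ot\id)\sg_{AB}(U\ad\ot\id)\big)=(U\ot\id)(\EC^\bt_d\ot\IC)(\sg_{AB})(U\ad\ot\id)$, we get $\rho''_{AB,j}=(U_j\ot\id)\rhoh_{AB}(U_j\ad\ot\id)$, hence $S(\rho''_{AB,j})=S(\rhoh_{AB})$ for every $j$ and the whole subtracted term equals $S(\rhoh_{AB})$. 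It then only remains to evaluate $S(\rho''_{AB})$.

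For that, I would use $\EC^\bt_d\circ\EC^\al_d=\EC^{\al\bt}_d$ and $\rho^\al_{AB}=(\EC^\al_d\ot\IC)(\dya\phi)$ to write $\rhoh_{AB}=(\EC^{\al\bt}_d\ot\IC)(\dya\phi)=\al\bt\dya\phi+(1-\al\bt)\id/d^2$, and then, by covariance again, $\rho''_{AB}=\frac{1}{d^2}\sum_{m,n}(U_{mn}\ot\id)\rhoh_{AB}(U_{mn}\ad\ot\id)=\al\bt M+(1-\al\bt)\id/d^2$, where $U_{mn}=\sgt_X^m\sg_Z^n$ and $M:=\frac{1}{d^2}\sum_{m,n}(U_{mn}\ot\id)\dya\phi(U_{mn}\ad\ot\id)$ is the uniform average of $\dya\phi$ over the $d^2$ encoding unitaries. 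Without loss of generality I would take $\ket\phi=\frac{1}{\sqrt d}\sum_k\ket k_A\ket k_B$, since a general MES differs from it by a local unitary on $B$ which affects none of the entropies involved.

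The key computation is to diagonalise $M$, which I would do in two stages. Writing $\dya\phi=\frac1d\sum_{k,k'}\dyad{k}{k'}\ot\dyad{k}{k'}$ and using $\sg_Z^n\dyad{k}{k'}\sg_Z^{n\ad}=\om^{n(k-k')}\dyad{k}{k'}$, the average over $n$ uses $\frac1d\sum_n\om^{n(k-k')}=\delta_{k,k'}$ to collapse $\dya\phi$ to the classically correlated state $\frac1d\sum_k\dya k\ot\dya k$. The remaining average $\frac1d\sum_m\sgt_X^m(\cdot)\sgt_X^{m\ad}$ dephases the $A$-factor in the eigenbasis $\{\ket{x_l}\}$ of $\sgt_X$: since $\sgt_X^m=\sum_l\om^{ml}\dya{x_l}$, one finds $\frac1d\sum_m\sgt_X^m\dya k\sgt_X^{m\ad}=\sum_l|\ip{x_l}{k}|^2\dya{x_l}=\sum_l c_{kl}\dya{x_l}$. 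Hence $M=\frac1d\sum_{k,l}c_{kl}\,\dya{x_l}_A\ot\dya k_B$, which is diagonal in the orthonormal product basis $\{\ket{x_l}_A\ot\ket k_B\}$ with eigenvalues $c_{kl}/d$ (a genuine probability distribution since $\sum_l c_{kl}=\mte{k}{\id}=1$). As $\id$ is also diagonal in that basis, $\rho''_{AB}$ has eigenvalues $\al\bt c_{kl}/d+(1-\al\bt)/d^2$, so $S(\rho''_{AB})=H(\{\al\bt c_{kl}/d+(1-\al\bt)/d^2\}_{kl})$. Combined with the first paragraph this is exactly \eqref{eqn15WerDep}, and the special case $\al=\bt=1$ (where $\rhoh_{AB}=\dya\phi$ is pure so $S(\rhoh_{AB})=0$) recovers Theorem~\ref{thm56aabas}.

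The main obstacle---really the only nontrivial point---is the twirl in the third step, and the subtlety worth flagging is that $\{\sgt_X^m\sg_Z^n\}$ is \emph{not} a group in general: the imperfect $\sgt_X$ need not obey a Weyl commutation relation with $\sg_Z$, so no standard group-twirl identity applies. The computation nevertheless goes through cleanly because the two sums decouple---each $\sg_Z^n$ acts before $\sgt_X^m$ and contributes only phases on the $\ket k$'s, so the $n$-sum is exactly a $Z$-basis dephasing of $A$ and the $m$-sum a dephasing of $A$ in the $\sgt_X$-eigenbasis---and it is precisely this two-step dephasing structure that pins the spectrum of $\rho''_{AB}$ to the stated form.
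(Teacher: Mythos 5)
Your proof is correct and follows essentially the same route as the paper's: Lemma~\ref{thmEqProbs} to fix uniform priors, covariance of the depolarising channel to reduce the subtracted term to $S(\rhoh_{AB})$, and the observation that uniformly averaging over powers of a Pauli-type operator is a dephasing channel, applied first in the $Z$ basis and then in the $\sgt_X$ eigenbasis to diagonalise $\rho''_{AB}$ and read off the spectrum $\{\al\bt c_{kl}/d+(1-\al\bt)/d^2\}$. The only cosmetic difference is that you reduce to the standard Bell state $\ket{\phi_0}$ at the outset, whereas the paper keeps a general MES and carries along the resulting orthonormal basis $\{\ket{w_k}\}$ on $B$; both are equally valid.
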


Figure~\ref{fgr2} plots the SDC capacity for various values of the noise $(1-\al\bt)$. The capacity monotonically decreases with $c$. Notice that the dependence on $c$ is less dramatic as the noise increases, and likewise the dependence on noise is less dramatic as $c$ increases. This suggests that entanglement and complementarity ``work together"; their effects are non-independent. In other words, the presence of one resource gives the potential to unlock the full power of the other resource to raise the capacity.

When the capacity is above the horizontal dotted line in Fig.~\ref{fgr2}, there is a quantum advantage; so clearly low $c$ and low noise are desirable to be in this region of the plot. For certain values of the noise, namely $(1-\al\bt)\geq 0.253$, there is no quantum advantage for all $c$.

\subsection{Lower bounds in terms of complementarity}\label{sct38}

\subsubsection{More special cases}\label{sct382z}

The symmetric nature of the resources considered above make equal probabilities the optimal encoding strategy and hence allow us to remove the optimisation over probability distributions when evaluating the capacity. We now consider more general resources. Although we cannot explicitly solve for the dependence of the capacity on complementarity in the general case, we can lower-bound the capacity in terms of the factor $\log(1/c)$. For example, it is easy to verify $H_{\bin}(c)\geq \log(1/c)$, hence \eqref{eqn15qubit} implies the lower bound $1+ \log(1/c)$ stated in \eqref{eqn3}. More generally, for arbitrary $d$, \eqref{eqn15MESa} implies that $\log d + \log(1/c)$ lower-bounds the SDC capacity, as can be easily verified, when the preshared state is a MES $\ket{\phi}$ and $\EC$ is noiseless.

Now we generalise this bound one step further, to the case where the preshared state $\rho_{AB}$ is arbitrary (but the channel $\EC$ is still noiseless), with the proof in App.~\ref{app3aaa}. In addition to stating our lower bound in terms of $\log(1/c)$, for completeness we also note that using full complementarity ($c=1/d$) achieves the overall capacity, \eqref{eqn10}, which is a well-known formula \cite{BowenPRA.63.022302, Ban2002, HiroshimaJPA2001, BrussEtAlPRL.93.210501}. Notice that \eqref{eqn10} gives the familiar $2\log \ddd$ result in the case of maximal entanglement, where $-S(A|B)_{\rho}=S(\rho_B)-S(\rho_{AB}) = \log \ddd$. More generally, $-S(A|B)_{\rho}$ quantifies the entanglement preshared between Alice and Bob \cite{DevWin03}.

\begin{proposition}
\label{thm56}
For any state $\rho_{AB}$, the overall SDC capacity, as shown in Refs.~\cite{BowenPRA.63.022302, Ban2002, HiroshimaJPA2001, BrussEtAlPRL.93.210501}, is
\begin{equation}
\label{eqn10}
 \CC(\rho_{AB},\IC) = \log \ddd - S(A|B)_{\rho} \, ,
\end{equation}
while for the encodings $\{\sgt_X^m \sg_Z^n\}$ we have
\begin{align}
\label{eqn15} \CC(\{\sgt_X^m \sg_Z^n\},\rho_{AB},\IC)\geq \log (1/c) -S(A|B)_{\rho} \,  .
\end{align}
\end{proposition}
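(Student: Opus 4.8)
The plan is to compute the Holevo quantity $\chi(\{p_j,\rho''_{AB,j}\})$ for the specific encoding set $\{\sgt_X^m\sg_Z^n\}$ with $\EC=\IC$, then lower-bound it by a clever choice of probabilities and an entropic inequality. First I would recall from \eqref{eqn7} that the capacity equals $\max_{\{p_j\}}\chi(\{p_j,\rho'_{AB,j}\})$ where $\rho'_{AB,j}=(\sgt_X^m\sg_Z^n\ot\id)\rho_{AB}(\sgt_X^m\sg_Z^n\ad\ot\id)$; since a specific choice of $\{p_j\}$ lower-bounds the max, I would take uniform probabilities $p_j=1/d^2$. With this choice, $\sum_j p_j S(\rho'_{AB,j})=S(\rho_{AB})$ because each $\rho'_{AB,j}$ is unitarily equivalent to $\rho_{AB}$. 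So the bound reduces to showing $S(\rho'_{AB})\geq\log(1/c)+S(\rho_B)$, using $S(A|B)_\rho=S(\rho_{AB})-S(\rho_B)$, where $\rho'_{AB}=\frac{1}{d^2}\sum_{m,n}(\sgt_X^m\sg_Z^n\ot\id)\rho_{AB}(\cdots)$ is the uniform twirl.

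The key structural fact is that the uniform average over all Pauli products $\sg_Z^n$ (with $n=0,\dots,d-1$) dephases $A$ in the standard basis, and the uniform average over $\sgt_X^m$ dephases $A$ in the $X$-basis $\{\ket{x_k}\}$. Composing both, $\rho'_{AB}$ has the form $\sum_k (\dya{x_k}\ot\id)\,\tau_{AB}\,(\dya{x_k}\ot\id)$ where $\tau_{AB}=\sum_l(\dya{l}\ot\id)\rho_{AB}(\dya{l}\ot\id)$ is the standard-basis-dephased state. In particular $\rho'_{AB}$ is block-diagonal with respect to the $X$-basis projectors on $A$, so $S(\rho'_{AB})=H(\{q_k\})+\sum_k q_k S(\sigma_{B|k})$ where $q_k=\Tr[(\dya{x_k}\ot\id)\rho'_{AB}]$ and $\sigma_{B|k}$ is the conditional state of $B$. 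The marginal $\rho'_B=\rho_B$ is unchanged by the local unitaries on $A$. The plan is then: (i) drop the nonnegative conditional-entropy term to get $S(\rho'_{AB})\geq H(\{q_k\})$ — but this is too lossy since it discards $S(\rho_B)$; instead (ii) use the operator relation between $\rho'_{AB}$ and the dephased-in-$Z$ state to extract $S(\rho_B)$. Concretely, I expect to invoke a chain like $S(\rho'_{AB})\geq S(\rho'_{AB})$ compared against the fully $Z$-dephased-on-$A$ state via monotonicity of relative entropy or concavity, landing on $S(\rho'_{AB})\geq\log(1/c)+S(\rho_B)$ by bounding the overlap terms $|\ip{k}{x_l}|^2\leq c$.

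The cleanest route is probably: write $\rho'_{AB}$ as the $X$-dephased version of $\tau_{AB}$ (the $Z$-dephased state), express $S(\rho'_{AB})=S(X)_{\rho'}+S(B|X)_{\rho'}$ where $S(X)_{\rho'}$ is the entropy of the $X$-measurement outcome distribution applied to $\tau_A$; then relate $S(B|X)_{\rho'}$ back to $S(\rho_B)$ and an entropy-reduction term controlled by $c$. The conditional state of $B$ given $X$-outcome $k$ is $\sigma_{B|k}=\frac{1}{q_k}\sum_l c_{kl}\,\xi_{B|l}$ where $\xi_{B|l}$ is the (unnormalized) conditional state of $B$ given $Z$-outcome $l$ on $A$ in $\rho_{AB}$, and $q_k=\sum_l c_{kl}/d$. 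By concavity of entropy and the bound $c_{kl}\leq c$, one gets $q_k S(\sigma_{B|k})\geq \sum_l c_{kl} S(\xi_{B|l})/\Tr\xi_{B|l}\cdot(\text{weight})\geq\ldots$, and summing over $k$ with $\sum_k c_{kl}=1$ gives back $S(\rho_B)$-type terms while the mixing entropy gives $H(\{q_k\})\geq\log(1/c)$ is not quite right either — rather $H(X)_{\tau}\geq$ something. The honest statement is that $S(\rho'_{AB})-S(\rho_B)\geq\log(1/c)$ should follow from a data-processing / Maassen–Uffink-type argument applied to the $A$-system: measuring in $X$ after the state was effectively in $Z$ incurs entropy at least $\log(1/c)$, and the $B$-register only helps. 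I anticipate the main obstacle is exactly this last step — cleanly transferring the scalar Maassen–Uffink inequality \eqref{eqn1} into an operator/conditional-entropy statement for the bipartite dephased state, i.e.\ showing $S(X$-dephased$)\geq S(Z$-dephased$)-\log c$ in a way that keeps the $B$-marginal intact. The resolution is likely a one-line application of monotonicity of the relative entropy $D(\tau_{AB}\,\|\,\sg_X^{\text{dephase}}(\tau_{AB}))$ under the pinching channel, or equivalently the pinching inequality $\sg_X^{\text{dephase}}(\tau_{AB})\leq c^{-1}\tau_{AB}$ as operators (which follows from $\dya{x_k}\leq c^{-1}\sum_l c_{kl}\dya{l}$), giving $S(\rho'_{AB})\leq S(\tau_{AB})+\log(1/c)$ in one direction; getting the inequality in the direction we need will require using $\tau_{AB}\leq\ldots$ or applying the pinching bound with roles of $X$ and $Z$ suitably arranged, which is the delicate bookkeeping I'd want to get exactly right.
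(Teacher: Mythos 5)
Your setup is exactly right and matches the paper's: you choose uniform probabilities $p_j=1/\ddd^2$, use unitary invariance to get $\sum_j p_j S(\rho'_{AB,j})=S(\rho_{AB})$, and correctly identify that the uniform Pauli-product twirl turns $\rho'_{AB}$ into a double dephasing, $\rho'_{AB}=\sum_{k,l}c_{kl}\,\dya{x_l}\ot\Tr_A[(\dya{k}\ot\id)\rho_{AB}]$, so that everything reduces to proving $S(\rho'_{AB})\geq\log(1/c)+S(\rho_B)$. But you stop short of proving that inequality: you float several candidate finishing moves (dropping a conditional-entropy term, a concavity argument over the conditional states $\sigma_{B|k}$, a pinching inequality of the form $\sg_X^{\text{dephase}}(\tau_{AB})\leq c^{-1}\tau_{AB}$) and explicitly concede that the direction and bookkeeping are unresolved. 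That unresolved step is the entire content of the bound, so as written the proposal has a genuine gap. Moreover, the pinching-type inequality you suggest points the wrong way: it would give an \emph{upper} bound $S(\rho'_{AB})\leq S(\tau_{AB})+\log(1/c)$, which is useless here, and the comparison you want is not to $\tau_{AB}$ at all.

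The paper's resolution is simpler than any of your candidates and requires no conditional decomposition. From the explicit form above, bound each coefficient termwise, $c_{kl}\leq c$, and use $\sum_l\dya{x_l}=\id$ together with $\sum_k\Tr_A[(\dya{k}\ot\id)\rho_{AB}]=\rho_B$ to get the operator inequality $\rho'_{AB}\leq c\,(\id\ot\rho_B)$. Then operator monotonicity of the logarithm gives
\begin{equation*}
S(\rho'_{AB})\;\geq\;-\Tr\bigl[\rho'_{AB}\log\bigl(c\,(\id\ot\rho_B)\bigr)\bigr]\;=\;\log(1/c)+S(\rho_B),
\end{equation*}
using $\rho'_B=\rho_B$. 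That single line closes the gap you identified. Separately, you do not address \eqref{eqn10} at all; the paper reproves it by combining the subadditivity upper bound $\CC\leq\log\ddd-S(A|B)_\rho$ with the observation that \eqref{eqn15} at $c=1/\ddd$ saturates it, and a complete write-up should include at least this remark.
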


The bound in \eqref{eqn15} has a very simple structure. It captures the contributions of \textit{both} of Alice's resources for SDC. While the term $-S(A|B)_{\rho}$ quantifies her preshared entanglement, the term $\log (1/c) $ is the same one appearing in the uncertainty relation of Maasen and Uffink, \eqref{eqn1}, and hence quantifies her complementarity resource. As long as the sum of the two terms does not dip below $\log \ddd$, then \eqref{eqn15} guarantees that SDC is advantageous over a classical strategy for sending information. Despite the simple additive nature of this bound, where it seems like entanglement and complementarity contribute independently to the capacity, note that our previous discussion of Fig.~\ref{fgr2} suggests otherwise. It is likely that the two resources ``work together" (i.e., in a non-independent way) to raise the capacity.

The above statements for the noiseless channel naturally generalise to the depolarising channel, as follows.
\begin{proposition}
\label{thm56456}
Let $\rho_{AB}$ be any state, let $\EC^{\bt}_d$ be the depolarising channel in \eqref{eqn16}, then
\begin{equation}
\label{eqn17}
 \CC(\rho_{AB}, \EC^{\bt}_d) = \log \ddd - S(A|B)_{\rhoh} \, ,
\end{equation}
where $\rhoh_{AB}= (\EC^{\bt}_d \ot \IC)(\rho_{AB})$.
In addition,
\begin{align}
\label{eqn18} \CC(\{\sgt_X^m \sg_Z^n\},\rho_{AB},\EC^{\bt}_d)\geq \log (1/c) -S(A|B)_{\rhoh} \, .
\end{align}
\end{proposition}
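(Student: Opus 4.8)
The plan is to reduce the depolarising-channel statement to the noiseless-channel statement of Proposition~\ref{thm56} by absorbing the depolarising noise into the preshared state. The key algebraic fact, already noted in the text, is that the depolarising channel commutes (on the $A$ side) with the conjugation by the local unitaries $\sgt_X^m\sg_Z^n$: since $\EC^{\bt}_d(\rho_A)=\bt\rho_A+(1-\bt)\id/d$ and the second term is unitarily invariant, we have $\EC^{\bt}_d\big(U\rho_A U\ad\big)=U\EC^{\bt}_d(\rho_A)U\ad$ for any unitary $U$. Consequently, for every $j$,
\begin{equation}
\label{eqn:commute}
\rho''_{AB,j}=(\EC^{\bt}_d\ot\IC)\big((U_j\ot\id)\rho_{AB}(U_j\ad\ot\id)\big)=(U_j\ot\id)\,\rhoh_{AB}\,(U_j\ad\ot\id),
\end{equation}
where $\rhoh_{AB}=(\EC^{\bt}_d\ot\IC)(\rho_{AB})$. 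Thus the SDC ensemble generated from $\rho_{AB}$ through the channel $\EC^{\bt}_d$ is exactly the SDC ensemble generated from the effective preshared state $\rhoh_{AB}$ through the noiseless channel $\IC$. This gives immediately
\begin{equation}
\label{eqn:reduction}
\CC(\{U_j\},\rho_{AB},\EC^{\bt}_d)=\CC(\{U_j\},\rhoh_{AB},\IC)
\end{equation}
for any fixed unitary set, and — taking a further maximum over $\{U_j\}$ on both sides — the same identity with the unitary set optimised away, $\CC(\rho_{AB},\EC^{\bt}_d)=\CC(\rhoh_{AB},\IC)$.

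With \eqref{eqn:reduction} in hand, both claims follow by quoting Proposition~\ref{thm56} with $\rho_{AB}$ replaced by $\rhoh_{AB}$. For \eqref{eqn17}: apply \eqref{eqn10} to $\rhoh_{AB}$ to get $\CC(\rho_{AB},\EC^{\bt}_d)=\CC(\rhoh_{AB},\IC)=\log d-S(A|B)_{\rhoh}$. For the encoding-restricted bound \eqref{eqn18}: apply \eqref{eqn15} to $\rhoh_{AB}$, noting that the overlap factor $c$ is a property only of the encoding unitaries (equivalently of $\Hbb$) and is unchanged by replacing $\rho_{AB}$ with $\rhoh_{AB}$, so $\CC(\{\sgt_X^m\sg_Z^n\},\rho_{AB},\EC^{\bt}_d)=\CC(\{\sgt_X^m\sg_Z^n\},\rhoh_{AB},\IC)\geq\log(1/c)-S(A|B)_{\rhoh}$.

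The only point requiring a little care — and the main (mild) obstacle — is making sure the reduction \eqref{eqn:reduction} is really an equality at the level of \emph{capacities}, not just of individual Holevo quantities: one must check that the maximisation over probability distributions $\{p_j\}$ in \eqref{eqn7} is transported faithfully under the substitution. This is immediate from \eqref{eqn:commute}, since that identity holds term-by-term in $j$, so $\chi(\{p_j,\rho''_{AB,j}\})$ computed from $\rho_{AB}$ through $\EC^{\bt}_d$ equals $\chi(\{p_j,(U_j\ot\id)\rhoh_{AB}(U_j\ad\ot\id)\})$ for \emph{the same} distribution $\{p_j\}$; maximising over $\{p_j\}$ on each side then yields \eqref{eqn:reduction}. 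No further optimisation subtlety arises because Proposition~\ref{thm56} already handles the $\{p_j\}$-optimisation internally. (One may alternatively invoke Lemma~\ref{thmEqProbs} in the special cases where it applies, but it is not needed for the general statement here.) Everything else is a direct citation of the noiseless-case results.
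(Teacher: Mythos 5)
Your proof is correct and rests on exactly the same key fact as the paper's one-line argument, namely the unitary covariance of the depolarising channel, $\EC^{\bt}_d(U\rho U\ad)=U\EC^{\bt}_d(\rho)U\ad$, which the paper invokes in the form $S(\rho''_{AB,j})=S(\rhoh_{AB})$ before declaring the rest ``essentially identical'' to the proof of Prop.~\ref{thm56}. Your packaging of this as a clean black-box reduction $\CC(\{U_j\},\rho_{AB},\EC^{\bt}_d)=\CC(\{U_j\},\rhoh_{AB},\IC)$, followed by citing \eqref{eqn10} and \eqref{eqn15} for the effective state $\rhoh_{AB}$ (using $\rhoh_B=\rho_B$ so the conditional entropy matches), is a slightly more modular presentation of the same proof.
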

The proof of Prop.~\ref{thm56456} is essentially identical to that of Prop.~\ref{thm56} given in App.~\ref{app3aaa}, provided one makes the observation \cite{ShadEtAlNJP2010} that $S(\rho''_{AB,j}) = S(\rhoh_{AB})$ because the action of $\EC^{\bt}_d$ commutes with the action of the unitary $U_j$. Again, $\log(1/c)$ in \eqref{eqn18} quantifies Alice's complementarity resource, but notice now that the contribution from the entanglement resource is a function of the post-channel state $\rhoh_{AB} $.

Propositions~\ref{thm56} and \ref{thm56456} considered an arbitrary preshared state paired with a special noisy channel. Now let us consider the opposite case, where the noisy channel is arbitrary but the preshared state has a special form. A slightly more general case than that of Thm.~\ref{thmWerDep1} considers a preshared Werner state $\rho^{\al}_{AB}$ and transmission over an arbitrary channel $\EC$. The structure of the following result is similar to that of Props.~\ref{thm56} and \ref{thm56456}. The proof is in App.~\ref{app3}.

\begin{proposition}
\label{prop5}
Let $\EC$ be an arbitrary quantum channel and let $\rho^{\al}_{AB}$ be a Werner state defined in \eqref{eqn24}, then \footnote{Ref.~\cite{BanEtAlJPA2004} calculated the overall capacity when the preshared state is maximally entangled and $\EC$ is arbitrary.}
\begin{equation}
\label{eqn25}
 \CC(\rho^{\al}_{AB},\EC) = \log \ddd - S(B|A)_{\rhoh} \, ,
\end{equation}
where $\rhoh_{AB} = (\EC \ot \IC)(\rho^{\al}_{AB})$. In addition,
\begin{align}
\label{eqn26} \CC(\{\sgt_X^m \sg_Z^n\},\rho^{\al}_{AB},\EC)\geq \log (1/c) -S(B|A)_{\rhoh} \, .
\end{align}
\end{proposition}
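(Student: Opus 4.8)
The strategy is to reduce both claims to the corresponding statements already established for the depolarising-channel case (Props.~\ref{thm56} and \ref{thm56456}, and Thm.~\ref{thmWerDep1}) by exploiting the defining feature of a Werner state: its twirl invariance. Concretely, $\rho^{\al}_{AB}$ is invariant under $(U\ot U^*)(\cdot)(U\ot U^*)\ad$ for all unitaries $U$, and equivalently $\rho^{\al}_{AB}=(\EC^{\al}_d\ot\IC)(\dya{\phi})$. I would first show that when the preshared state is a Werner state, an arbitrary channel $\EC$ may be replaced, without changing any of the relevant entropies in \eqref{eqn7}, by its ``twirled'' version $\bar{\EC}(\cdot)=\int dU\,U\ad\,\EC(U(\cdot)U\ad)\,U$, which is a depolarising channel (the only channels commuting with the full unitary group are depolarising channels, up to the usual argument via Schur's lemma). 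The point is that acting with $\EC$ on $A$ of $\rho^{\al}_{AB}$ is the same as acting with $\bar\EC$, because the $U$-twirl on the $A$ system can be absorbed into the $U^*$-twirl on the $B$ system under which the Werner state is invariant.

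With that reduction in hand, the overall-capacity formula \eqref{eqn25} follows from Prop.~\ref{thm56456}: replacing $\EC$ by the equivalent depolarising channel $\EC^{\bt}_d$ gives $\CC(\rho^{\al}_{AB},\EC)=\log d - S(A|B)_{\rhoh}$, and one checks that $S(A|B)_{\rhoh}=S(B|A)_{\rhoh}$ because $\rhoh_{AB}=(\EC^{\bt}_d\ot\IC)(\rho^{\al}_{AB})$ is itself a Werner state, hence has $S(\rho_A)=S(\rho_B)$ (both marginals are maximally mixed), so $S(A|B)=S(\rho_{AB})-S(\rho_B)=S(\rho_{AB})-S(\rho_A)=S(B|A)$. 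The reason the statement is phrased with $S(B|A)$ rather than $S(A|B)$ is presumably to make the parallel with Prop.~\ref{prop5bis}-type results transparent and to emphasise that it is really a statement about the state $\rhoh_{AB}$ after the channel; for a Werner-derived state the two coincide anyway.

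For the lower bound \eqref{eqn26} I would follow the template of App.~\ref{app3aaa}. By Lemma-type reasoning (or directly, since the twirled resource is again a symmetric one as in Lemma~\ref{thmEqProbs}) equal probabilities $p_j=1/d^2$ are optimal, so $\CC=S(\rho''_{AB})-\frac1{d^2}\sum_j S(\rho''_{AB,j})$. The averaged encoded state $\rho''_{AB}=\frac1{d^2}\sum_j(\EC\ot\IC)(\sgt_X^m\sg_Z^n\ot\id)\rho^{\al}_{AB}(\cdots)\ad$ is, after the Pauli twirl on $A$, the maximally mixed state $\id/d^2$ on $A$ tensored appropriately — more precisely the Pauli-product twirl sends $\rho^{\al}_{AB}$ (or rather $\sgt_X$-twisted version) to a state whose $A$-marginal is maximally mixed, giving $S(\rho''_{AB})\ge S(\rho''_A)=\log d$ after applying subadditivity; the refinement to $\log(1/c)$ comes, exactly as in the MES case, from bounding the entropy of the $\rho''_{AB}$ eigenvalues below using the overlap matrix $c_{kl}$ and the concavity/grouping argument that produced \eqref{eqn15MESa}. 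For the subtracted term, one uses that $S(\rho''_{AB,j})=S((\EC\ot\IC)(\rho^{\al}_{AB}))=S(\rhoh_{AB})$ for every $j$, because the local Pauli unitary on $A$ can be commuted through $\EC$ after a twirl (the Werner state's invariance is again what makes this exact), and then $\log d - S(\rhoh_{AB})$ is rewritten as $\log(1/c)-S(B|A)_{\rhoh}$ using $S(\rhoh_A)=\log d$ (the $A$-marginal of a channel output from a Werner state — whose $A$-marginal is maximally mixed — stays maximally mixed? no: here one uses instead $S(B|A)_{\rhoh}=S(\rhoh_{AB})-S(\rhoh_A)$ together with $S(\rhoh_A)=S(\rho^{\al}_A)=\log d$ since $\EC$ acts on $A$ and $\rhoh_A=\EC(\rho^{\al}_A)=\EC(\id/d)$, which need not be maximally mixed — so one instead keeps $S(\rhoh_A)$ explicit and notes $\log d$ in $S(\rho''_{AB})\ge\log d$ should actually be $S(\rhoh_A)$, matching up cleanly).

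The main obstacle I anticipate is getting the bookkeeping of marginal entropies exactly right: showing that the twirl reductions are genuinely entropy-preserving (not just capacity-preserving), pinning down which marginal stays maximally mixed after the channel acts, and verifying that the lower-bound refinement from $\log d$ to $\log(1/c)$ survives in the presence of the arbitrary $\EC$ — i.e.\ that the eigenvalue-majorisation argument behind \eqref{eqn15MESa} still applies to $\rho''_{AB}$ when the channel is not depolarising. Everything else (optimality of equal probabilities, constancy of $S(\rho''_{AB,j})$ in $j$, the identity $S(B|A)_{\rhoh}=S(\rhoh_{AB})-S(\rhoh_A)$) is routine once the twirl-invariance of the Werner state is invoked correctly.
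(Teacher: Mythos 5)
Your central reduction---replacing the arbitrary channel $\EC$ by its twirl $\bar\EC(\cdot)=\int dU\,U\ad\,\EC(U(\cdot)U\ad)\,U$ and claiming this changes none of the relevant entropies---does not work. Using $(U\ot\id)\ket{\phi_0}=(\id\ot U^T)\ket{\phi_0}$, one finds that $(\bar\EC\ot\IC)(\rho^{\al}_{AB})$ is the $U\ad\ot U^{T}$-twirl of $(\EC\ot\IC)(\rho^{\al}_{AB})$, i.e.\ the projection of $\rhoh_{AB}$ onto the isotropic family, not $\rhoh_{AB}$ itself: the post-channel $U\ad$ acts \emph{after} $\EC$ and so cannot be absorbed into the Werner state's invariance. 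A unitary channel $\EC=V(\cdot)V\ad$ is a clean counterexample: it leaves all entropies of $\rho^{\al}_{AB}$ unchanged, so the capacity equals the noiseless value, whereas the twirled channel is a strictly noisy depolarising channel with a strictly smaller capacity. Relatedly, your claim that $\rhoh_{AB}$ is again a Werner state with both marginals maximally mixed is false ($\rhoh_A=\EC(\id/d)$ in general), and this is precisely why the proposition features $S(B|A)_{\rhoh}$ rather than $S(A|B)_{\rhoh}$; your own parenthetical correction near the end acknowledges this and thereby contradicts the premise of your reduction.

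The paper's proof avoids twirling the channel altogether. Since $\rho^{\al}_{AB}=\al\dya{\phi}+(1-\al)\id/d^2$ with $\ket{\phi}=(\id\ot U_B)\ket{\phi_0}$, the encoding unitary $U_j$ on $A$ can be moved entirely to $B$, giving $\rho'_{AB,j}=(\id\ot U_BU_j^T)\,\sg^{\al}_{AB}\,(\id\ot U_BU_j^T)\ad$ with $\sg^{\al}_{AB}=\al\dya{\phi_0}+(1-\al)\id/d^2$. A unitary on $B$ trivially commutes with $\EC$ acting on $A$, so $S(\rho''_{AB,j})=S(\rhoh_{AB})$ for every $j$ (your phrase about commuting the Pauli ``through $\EC$ after a twirl'' is not the operative mechanism), and the capacity reduces to $\max_{\{p_j\}}S(\rho''_{AB})-S(\rhoh_{AB})$. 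The upper bound is subadditivity with $\rho''_A=\EC(\id/d)=\rhoh_A$ and $\rho''_B=\rho_B=\id/d$, yielding $\log d+S(\rhoh_A)-S(\rhoh_{AB})=\log d-S(B|A)_{\rhoh}$. For the lower bound, the step you flag as an open obstacle is handled not by any eigenvalue or majorisation argument but by the operator inequality \eqref{eqn13}: since $\EC\ot\IC$ preserves positivity, $\rho''_{AB}\leq c\,\EC(\id)\ot\rho_B$, and operator monotonicity of $\log$ gives $S(\rho''_{AB})\geq -\Tr[\rho''_{AB}\log(c\,\EC(\id)\ot\rho_B)]=\log(1/c)+S(\rhoh_A)$, which is \eqref{eqn26}. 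Setting $c=1/d$ then shows the subadditivity bound is attained, proving \eqref{eqn25}.
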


\subsubsection{The general case}\label{sct42}

In the general case of an arbitrary preshared state and an arbitrary noisy channel, a simple expression for the SDC capacity is not known, unlike some of the previously considered special cases. Nevertheless, we can still give a lower bound on the SDC capacity that depends on the complementarity of Alice's unitary encodings, with the proof in App.~\ref{app4}. Naturally, the general bound is more complicated than that for the special cases considered above.

\begin{theorem}
\label{thm58}
Let $\EC$ be an arbitrary quantum channel and let $\rho_{AB}$ be an arbitrary quantum state, then
\begin{align}
\label{eqn29} &\CC(\{\sgt_X^m \sg_Z^n\},\rho_{AB},\EC)\geq \log( 1/c)+D(\sum_j \frac{1}{\ddd^2}\rho''_{A,j} || \EC(\id))\notag\\
&\hspace{10pt}+S(\rho_B)+S(\sum_j \frac{1}{\ddd^2}\rho''_{A,j}) -  \sum_j \frac{1}{\ddd^2}S(\rho''_{AB,j}). 
\end{align}
\end{theorem}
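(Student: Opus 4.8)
The plan is to lower-bound $\CC(\{\sgt_X^m\sg_Z^n\},\rho_{AB},\EC)$ by the Holevo quantity of the \emph{uniform} encoding, and then to rewrite everything except the term $-\sum_j\tfrac1{\ddd^2}S(\rho''_{AB,j})$ (which already appears on the right of \eqref{eqn29}) as a single relative-entropy quantity that can be controlled by data processing together with a Maassen--Uffink-type estimate. Since the capacity in \eqref{eqn7} is a maximum over probability distributions, taking $p_j=1/\ddd^2$ gives $\CC\geq\chi(\{1/\ddd^2,\rho''_{AB,j}\})=S(\rho''_{AB})-\tfrac1{\ddd^2}\sum_j S(\rho''_{AB,j})$, so it suffices to prove
\begin{equation}\label{eqnPlanA}
S(\rho''_{AB})\geq\log(1/c)+D(\rho''_A || \EC(\id))+S(\rho''_A)+S(\rho_B),
\end{equation}
where $\rho''_A:=\sum_j\tfrac1{\ddd^2}\rho''_{A,j}$ is the $A$-marginal of $\rho''_{AB}$.

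First I would identify the uniform twirl of the unitaries as a double pinching. Averaging over $n$ implements $\tfrac1{\ddd}\sum_n\sg_Z^n(\cdot)\sg_Z^{-n}=\MC_Z$, the pinching in the standard basis, and then averaging over $m$ implements $\tfrac1{\ddd}\sum_m\sgt_X^m(\cdot)\sgt_X^{-m}=\MC_X$, the pinching in the $\{\ket{x_l}\}$ basis; these are two independent cyclic-group averages, so the $m$- and $n$-sums factorise even though $\sg_Z$ and $\sgt_X$ are not jointly Weyl. Hence $\rhob_{AB}:=\tfrac1{\ddd^2}\sum_j(U_j\ot\id)\rho_{AB}(U_j\ad\ot\id)=(\MC_X\ot\IC)\big((\MC_Z\ot\IC)(\rho_{AB})\big)=\sum_l\dya{x_l}_A\ot\tau^{(l)}_B$, with $\tau^{(l)}_B=\sum_k c_{kl}\bra{k}\rho_{AB}\ket{k}$, and $\rho''_{AB}=(\EC\ot\IC)(\rhob_{AB})$. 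Two bookkeeping facts follow: $(\rhob)_B=\sum_l\tau^{(l)}_B=\sum_k\bra{k}\rho_{AB}\ket{k}=\rho_B$ because $\sum_l c_{kl}=1$, so also $\rho''_B=\rho_B$; and $D(\rho''_A || \EC(\id))+S(\rho''_A)=-\Tr(\rho''_A\log\EC(\id))$. Using these in the direct computation $D(\rho''_{AB} || \EC(\id)\ot\rho_B)=-S(\rho''_{AB})-\Tr(\rho''_A\log\EC(\id))+S(\rho_B)$ shows that \eqref{eqnPlanA} is \emph{exactly equivalent} to
\begin{equation}\label{eqnPlanB}
D\big(\rho''_{AB}\, || \,\EC(\id)\ot\rho_B\big)\leq\log c.
\end{equation}

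To prove \eqref{eqnPlanB} I would invoke monotonicity of the relative entropy. Since $\EC(\id)\ot\rho_B=(\EC\ot\IC)(\id_A\ot\rho_B)$ and $\rho''_{AB}=(\EC\ot\IC)(\rhob_{AB})$, data processing under $\EC\ot\IC$ (which is valid also with the unnormalised second argument $\id_A\ot\rho_B$) gives $D(\rho''_{AB} || \EC(\id)\ot\rho_B)\leq D(\rhob_{AB} || \id_A\ot\rho_B)=S(\rho_B)-S(\rhob_{AB})=-S(A|B)_{\rhob}$. So it remains to prove the uncertainty-type bound $S(A|B)_{\rhob}\geq\log(1/c)$, which is the heart of the argument. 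Because $\rhob_{AB}$ is classical on $A$ in the $X$ basis, $S(A|B)_{\rhob}$ equals the conditional entropy $H(X|B)$ of an $X$-basis measurement of $A$ applied to the $Z$-pinched state $(\MC_Z\ot\IC)(\rho_{AB})=\sum_k\dya{k}_A\ot\sigma^{(k)}_B$. Appending a register $K$ holding a copy of the (classical) $Z$-label of $A$ changes no entropy, and then $H(X|B)\geq H(X|BK)$ since $I(X:K|B)\geq0$ by strong subadditivity; conditioned on $K=k$ the system $A$ is $\ket{k}$, so measuring it in the $X$ basis yields the distribution $\{c_{kl}\}_l$ independently of $B$, whence $H(X|BK)=\sum_k q_k H(\{c_{kl}\}_l)$ with $q_k=\bra{k}\rho_A\ket{k}$. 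Finally $H(\{c_{kl}\}_l)=-\sum_l c_{kl}\log c_{kl}\geq-(\log c)\sum_l c_{kl}=\log(1/c)$, since $c_{kl}\leq c$ for all $k,l$ and $\sum_l c_{kl}=1$. Chaining \eqref{eqnPlanB} with this bound gives \eqref{eqn29}; note that for $\EC=\IC$ it collapses to \eqref{eqn15} of Prop.~\ref{thm56}, which is a useful consistency check.

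I expect the main difficulty to be organisational rather than conceptual: one must verify carefully that the uniform twirl really yields the doubly pinched state $\rhob_{AB}$ (justifying the factorisation of the $m$- and $n$-sums into two independent pinchings, and the resulting block form), and that the passage between \eqref{eqn29} and \eqref{eqnPlanB} is an exact identity, which hinges on $\rho''_B=\rho_B$ and on the rewriting $D(\rho''_A || \EC(\id))+S(\rho''_A)=-\Tr(\rho''_A\log\EC(\id))$. The genuinely nontrivial inputs are only data processing for the relative entropy and strong subadditivity, both standard; the complementarity factor $c$ enters solely through the elementary last estimate $H(\{c_{kl}\}_l)\geq\log(1/c)$, which is precisely the step underlying the Maassen--Uffink relation \eqref{eqn1}.
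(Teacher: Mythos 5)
Your proposal is correct, and its skeleton coincides with the paper's: lower-bound the capacity by the Holevo quantity at uniform probabilities, recognise the uniform twirl as the double pinching $\rhob_{AB}=(\MC_X\ot\IC)\circ(\MC_Z\ot\IC)(\rho_{AB})=\sum_{k,l}c_{kl}\,\dya{x_l}\ot\bra{k}\rho_{AB}\ket{k}$, and then extract the $\log(1/c)$ from the fact that $c_{kl}\leq c$ and $\sum_l c_{kl}=1$. Where you diverge is in how the central estimate $S(\rho''_{AB})\geq \log(1/c)-\Tr[\rho''_A\log\EC(\id)]+S(\rho_B)$ is obtained. The paper goes through the operator inequality $\rhob_{AB}\leq c\,(\id\ot\rho_B)$ (its Eq.~(C1)/(C2)), pushes it through $\EC\ot\IC$ by positivity, and applies operator monotonicity of $\log$ in one line: $S(\rho''_{AB})\geq-\Tr[\rho''_{AB}\log(c\,\EC(\id)\ot\rho_B)]$. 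You instead recast the target as $D(\rho''_{AB}\,\|\,\EC(\id)\ot\rho_B)\leq\log c$ (an exact reformulation, correctly hinging on $\rho''_B=\rho_B$), reduce it by data processing to $-S(A|B)_{\rhob}\leq\log c$, and prove the latter as a genuine conditional-entropy uncertainty relation via strong subadditivity and the classical bound $H(\{c_{kl}\}_l)\geq\log(1/c)$. Both chains are valid and rest on the same quantitative input; the paper's is shorter, while yours makes the connection to Maassen--Uffink-type relations structurally explicit (the $\log(1/c)$ literally arises as a lower bound on $H(X|BK)$) and isolates $D(\rho''_{AB}\,\|\,\EC(\id)\ot\rho_B)\leq\log c$ as a clean intermediate statement. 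One small point worth making explicit if you write this up: monotonicity of $D(\cdot\|\cdot)$ under $\EC\ot\IC$ is being applied with the unnormalised second argument $\id_A\ot\rho_B$, which is legitimate since relative entropy and its data-processing inequality extend to positive semidefinite second arguments, but it deserves a sentence.
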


While this bound is somewhat complicated, it does imply the previous lower bounds given in Props.~\ref{thm56}, \ref{thm56456}, and \ref{prop5}. To appreciate this, consider the special case where $\EC$ is a unital channel, defined by the property $\EC(\id) = \id$. In this case, the second and fourth terms in \eqref{eqn29} cancel, and \eqref{eqn29} reduces to
\begin{equation}
\label{eqn19}
\CC(\{\sgt_X^m \sg_Z^n\},\rho_{AB},\EC) \geq \log (1/c) +S(\rho_B) - \sum_j \frac{1}{\ddd^2}S(\rho''_{AB,j}).
\end{equation}
Noting that the depolarising channel is unital, \eqref{eqn19} further reduces to \eqref{eqn18} by setting $S(\rho''_{AB,j}) =  S(\rhoh_{AB})$ for all~$j$. Alternatively, consider the case where $\EC$ is arbitrary but suppose $\rho_{AB}$ has maximally mixed marginals, $\rho_A = \id /\ddd$ and $\rho_B = \id /\ddd$. This makes $\rho''_{A,j} = \EC(\id / d)= \rhoh_A$ for each $j$ and hence \eqref{eqn29} reduces to
\begin{equation}
\label{eqn27}
\CC(\{\sgt_X^m \sg_Z^n\},\rho_{AB},\EC) \geq \log(1/c) +  S(\rhoh_A) -  \sum_j \frac{1}{\ddd^2}S(\rho''_{AB,j}).
\end{equation}
Noting that Werner states have maximally mixed marginals, \eqref{eqn27} reduces to \eqref{eqn26} also by setting $S(\rho''_{AB,j}) =  S(\rhoh_{AB})$ for all~$j$.

Via the $\log(1/c)$ term in \eqref{eqn29}, we see the importance of complementarity even in the most general case of arbitrary $\EC$ and arbitrary $\rho_{AB}$. One can rewrite the sum of the first two terms in \eqref{eqn29} as $q:= D(\sum_j \frac{1}{\ddd^2}\rho''_{A,j} || \EC(\id/\ddd))-\log( \ddd \cdot c)$ and one can show that $q\leq 0$, so as $c$ approaches $1/\ddd$ (corresponding to optimal complementarity) then $q$ must vanish. Thus, when $c=1/d$, \eqref{eqn29} becomes
\begin{align}
\label{eqn29nbb}
\CC(\{\sgt_X^m \sg_Z^n\},\rho_{AB},\EC)& \geq S(\rho_B) +S(\EC(\id /d)) \notag\\
&\hspace{10pt}- \sum_j \frac{1}{\ddd^2}S(\rho''_{AB,j}),
\end{align}
where the right-hand-side corresponds to the capacity in the case of choosing equal probabilities for the standard Pauli product operators. So the bound in \eqref{eqn29} approaches that in \eqref{eqn29nbb} as the complementarity grows.

\subsubsection{Stronger bound}\label{sct3345}

Thusfar we have used the parameter $c$ in our bounds, mainly because it appears in a well-known uncertainty relation \eqref{eqn1} and may be familiar to readers. However, very recently the bound in \eqref{eqn1} has been strengthened, for $d>2$, in Ref.~\cite{ColesPianiIEUR2013}. Likewise, we can use the technique of \cite{ColesPianiIEUR2013} to strengthen (for $d>2$) our bound on the SDC capacity. This involves replacing Eq.~\eqref{eqn13} from Appendix~\ref{app3aaa} with a stronger statement:
\begin{equation}
\label{eqnBetterbound2343}
\rho'_{AB} \leq \sum_l ( \max_k c_{kl}) \dya{x_l} \ot  \rho_B.
\end{equation}
Using \eqref{eqnBetterbound2343} in place of \eqref{eqn13}, for example in the derivation of \eqref{eqn15}, gives the result
\begin{equation}
\label{eqnBetterbound4423}
\CC(\{\sgt_X^m \sg_Z^n\},\rho_{AB},\IC)\geq f(\rho_A) -S(A|B)_{\rho}
\end{equation}
where 
\begin{equation}
\label{eqnBetterbound4423b}
f(\rho_A):= \sum_l \Big(\sum_k c_{kl} \mte{k}{\rho_A}\Big) \log (1/\max_{k'} c_{k'l} ).
\end{equation}
Note that $f(\rho_A)\geq \log(1/c)$ since averaging over $l$ is larger than minimizing over $l$, so \eqref{eqnBetterbound4423} implies \eqref{eqn15}.

\section{Conclusions}\label{sct45}

The results presented here shed light on the quantum advantage in superdense coding. While it was well-known that entanglement is crucial for SDC, we showed in Prop.~\ref{prop13454} that \textit{complementarity} is also crucial for SDC. Without non-commuting unitaries, Alice can communicate no more than 1 dit of information for each qudit that she sends, even if she has preshared entanglement with Bob. In this sense, one could experimentally use SDC as a complementarity witness: if Alice can communicate more than 1 dit per qudit sent, then her encodings exhibit complementarity.

We then presented quantitative relations between the amount of complementarity and the SDC capacity, in Sec.~\ref{sct3UncRel}. To formulate these relations, we supposed that Alice uses Pauli products to encode her messages, but her Pauli operators may be ``imperfect", i.e., the unitary gate $\Hbb$ that relates the $X$ Pauli operator to the $Z$ Pauli operator may not be a perfect Hadamard. We quantified the complementarity of Alice's encodings by considering the overlap matrix $c_{kl}= |\ip{k}{x_l}|^2= |\mted{k}{\Hbb}{l}|^2$.

Our simplest and perhaps most beautiful result was the exact expression for the SDC capacity, given by $H(\{c_{kl}/d\})$, when Alice and Bob preshare a MES and communicate over of a noiseless channel. Thus we found that the SDC capacity depends on every single element of the matrix $c_{kl}$, is symmetric under interchanging any matrix elements of $c_{kl}$, and is given by simply taking the Shannon entropy of this matrix. This formula for the SDC capacity allowed us to argue, in Prop.~\ref{thm56aaswbas}, that the capacity is 2 dits iff the encodings have fully complementarity, and is 1 dit iff the encodings have zero complementarity.

We then extended this result to the case where the resources have noise, albeit a symmetric sort of noise. Allowing the preshared state to be of the Werner type and the channel to be depolarising, we explicitly solved for the SDC capacity as a function of the overlap matrix in Thm.~\ref{thmWerDep1}. An important observation in proving this was to show that encoding with equal probabilities is the optimal strategy in this case. The curves in Fig.~\ref{fgr2} for different values of noise showed that the dependence on complementarity is less dramatic when the preshared entanglement is lower, \emph{and vice-versa}, suggesting that complementarity and entanglement ``work together" to raise the SDC capacity. In Sec.~\ref{sct38} we considered more general resources, and although we could not give an equation relating capacity to complementarity in the general case, we obtained lower bounds on the capacity in terms of $\log(1/c)$. Theorem~\ref{thm58} gave such a bound in the most general case of arbitrary resources.

The overlap matrix appearing in our results is the same one appearing in well-known entropic uncertainty relations. In particular $c$ appears in the Maassen-Uffink uncertainty relation \eqref{eqn1} and others \cite{EURreview1, RenesBoileau, BertaEtAl, TomRen2010, ColesColbeckYuZwolak2012PRL}. Thus, our equations and bounds for the SDC capacity resemble uncertainty relations. (Since many of our relations take into account more elements of the matrix $c_{kl}$, they \textit{especially} resemble information exclusion relations \cite{ColesPianiIEUR2013}.) From the perspective of fundamental physics, it seems fascinating that the notion of complementarity, that observables come in pairs that exhibit a knowledge tradeoff, can be stated in a completely different way from Heisenberg's original view \cite{Heisenberg}. Our results indicate that \textit{the complementarity of two observables is quantitatively captured by their usefulness for superdense coding}. This is a conceptually new paradigm for capturing the notion of complementarity.

Now let us consider ways in which our results could be extended. While we have studied the capacity in the asymptotic limit of infinite usages, it should be possible to derive bounds for one-shot dense coding similar to the ones we have given here. This is because the one-shot classical capacity is related to a relative entropy function \cite{WangRennerPRL108.200501} with some properties similar to von Neumann relative entropy \cite{ColesColbeckYuZwolak2012PRL}, namely the data-processing inequality.

We have focused on complementarity used by Alice in the encoding, but it could be interesting to consider Bob's decoding. In the ideal case considered in the Introduction where Alice and Bob preshare a MES of two qubits, Bob measures in the Bell basis to decode Alice's message. A Bell basis measurement can be viewed as requiring complementarity, so it could be interesting to explore restrictions on Bob's decoding.

Finally, our results seem to suggest that complementarity is a resource, consistent with Refs.~\cite{BranHoro2013, ColesPiani2013arXiv1305.3442C, Renes2012arXiv1212.2379R}, and also with other references \cite{Stahlke2013arXiv1305.2186S,BraunGeorgPhysRevA.73.022314} that discussed a possibly related resource called interference \footnote{The interference measures in these references are not obviously related to our complementarity measures.}. This idea remains somewhat vague when one considers, for example, how well developed the resource theory is for entanglement \cite{HHHH09}. Is it possible to formulate a resource theory for complementarity? Perhaps the idea would be that Hadamard gates are the resource, as suggested in Ref.~\cite{BraunGeorgPhysRevA.73.022314}, and one could have a notion of a partial Hadamard analogous to the notion of partial entanglement. Ultimately this could lead us closer to answering the widely debated question: what gives the quantum advantage? 

\section{Acknowledgments}

The author thanks J\k{e}drzej Kaniewski and Dan Stahlke for helpful suggestions. The author is funded by the Ministry of Education (MOE) and National Research Foundation Singapore, as well as MOE Tier 3 Grant ``Random numbers from quantum processes" (MOE2012-T3-1-009).

\appendix

\section{Proof of Prop.~\ref{prop13454}} \label{app0}

\begin{proof}
The fact that the $U_j$ are commuting implies that they can be diagonalised in the same basis, and w.l.o.g.\ let us set this to the standard basis $\{\ket{k}\}$. So let us write $U_j = \sum_k e^{i \phi_{j,k}}\dya{k}$ where the $e^{i \phi_{j,k}}$ are some arbitrary phase factors. Define the quantum channel $\ZC$ that decoheres system $A$ in the standard basis: $\ZC(\rho_A) = \sum_k \dya{k}\rho_A \dya{k} $. Then we have 
\begin{widetext}
\begin{equation}
\label{eqn9}
(\ZC \ot \IC)(\rho'_{AB,j}) = \sum_{k,k',k''} e^{i(\phi_{j,k'}-\phi_{j,k''})}(\dya{k}\ot \id)(\dya{k'}\ot \id)\rho_{AB}(\dya{k''}\ot \id) (\dya{k}\ot \id) =  (\ZC \ot \IC)(\rho_{AB}),
\end{equation}
\end{widetext}
where $(\ZC \ot \IC)(\rho_{AB})=\sum_k q_k \dya{k}\ot \rho_{B,k}$ is a classical-quantum state with $q_k \rho_{B,k}= \Tr_A [(\dya{k}\ot \id)\rho_{AB}]$ and $\Tr ( \rho_{B,k}) =1, \forall k$. Hence \eqref{eqn7} becomes:
\begin{align}
\label{eqn23793563}
\CC(\{U_j\},\rho_{AB},\EC) &\leq \CC(\{U_j\},\rho_{AB},\IC)\notag\\
& = \max_{ \{p_j\} } S(\rho'_{AB})-S(\rho_{AB})\notag\\
&\leq S((\ZC \ot \IC)(\rho_{AB})) - S(\rho_{AB})\notag\\
&= S((\ZC \ot \IC)(\rho_{AC})) - S(\rho_{C})\notag\\
&\leq S(\ZC (\rho_{A}))\leq \log \ddd .
\end{align}
The first line notes that the Holevo quantity is monotonic under channels, $\chi(\{ p_j, \rho_j  \} )\geq \chi(\{ p_j, \EC( \rho_j ) \} )$ \cite{NieChu00}. The second line notes that the entropy is invariant to unitaries giving $S(\rho'_{AB,j}) = S(\rho_{AB})$. The third lines notes that a decohering channel such as $\ZC$ never decreases the entropy, giving $S(\rho'_{AB})\leq S((\ZC \ot \IC)(\rho'_{AB}))=S((\ZC \ot \IC)(\rho_{AB}))$, where the last equality is from \eqref{eqn9}. For the fourth line we let $C$ be a quantum system that purifies $\rho_{AB}$, which gives $S((\ZC \ot \IC)(\rho_{AB})) = S(\ZC(\rho_{A}))+\sum_k q_k S(\rho_{B,k})= S(\ZC(\rho_{A}))+\sum_k q_k S(\rho_{C,k})=S((\ZC \ot \IC)(\rho_{AC}))$, where $\rho_{C,k}$ is defined analogously to $\rho_{B,k}$ and hence they have the same non-zero spectrum. For the last line we invoked subadditivity for the state $(\ZC \ot \IC)(\rho_{AC})$.\end{proof}

\section{Proof of Lemma~\ref{thmEqProbs}} \label{app1}

\begin{proof}
Let us first mention a few basic properties that we will use. Any MES $\ket{\phi}\in \HC_{AB}$ can be written as $\ket{\phi} = (\id \ot U_B)\ket{\phi_0}$ where $\ket{\phi_0} = \sum_k \ket{k}\ket{k}/\sqrt{\ddd}$ is the standard Bell state and $U_B$ is some local unitary. For any operator $Q$, we have 
\begin{equation}
\label{eqnMESprop1}
(Q\ot \id) \ket{\phi_0} =(\id \ot Q^T )\ket{\phi_0}
\end{equation}
where $^T$ denotes the transpose in the standard basis.

Another important property is that the depolarising channel commutes with the unitary encodings: for any unitary $U$, 
\begin{equation}
\label{eqnDepChan11}
\EC^{\bt}_d(U \rho U\ad) = U \EC^{\bt}_d( \rho)U\ad.
\end{equation}

Now let us write the encoding index as $j = (m,n)$ where $m$ and $n$ run from 0 to $d-1$, and denote the ensemble received by Bob as $\{ p_{(m,n)}, \rho''_{AB,(m,n)} \}$, with 
$$\rho''_{AB,(m,n)} =( \EC^{\bt}_d \ot \IC)[(\sgt_X^m\sg_Z^n\ot \id)\rho^{\al}_{AB}(\sgt_X^m\sg_Z^n\ot \id)\ad].$$
The ensemble average state is
\begin{align}
\label{eqn13456833}
&\rho''_{AB} = \sum_{m,n} p_{(m,n)} \rho''_{AB,(m,n)}\notag\\
& = \sum_{m,n} p_{(m,n)} (\EC^{\al\bt}_d \ot \IC)  [(\sgt_X^m\sg_Z^n\ot \id)\dya{\phi}(\sgt_X^m\sg_Z^n\ot \id)\ad] \notag\\
& \eqprop{S}{=} \sum_{(m,n)} p_{(m,n)} (\EC^{\al\bt}_d \ot \IC)  [(\sgt_X^m\ot \sg_Z^n)\dya{\phi_0}(\sgt_X^m\ot \sg_Z^n)\ad] \notag\\
& = \sum_{m,n} p_{(m,n)} \sgt_X^m\ot \sg_Z^n (\EC^{\al\bt}_d \ot \IC) (\dya{\phi_0})(\sgt_X^m\ot \sg_Z^n)\ad\notag\\
& \eqprop{S}{=}  \sum_{m,n} p_{(m+p,n+q)} \sgt_X^{m}\ot \sg_Z^{n} (\EC^{\al\bt}_d \ot \IC) (\dya{\phi_0}) (\sgt_X^{m}\ot \sg_Z^{n})\ad
\end{align}
where the notation $\eqprop{S}{=}$ means equal up to the von Neumann entropy, and we invoked properties \eqref{eqnMESprop1} and \eqref{eqnDepChan11}. The last line in \eqref{eqn13456833} acted on the state with a unitary $(\sgt_X^p\ot \sg_Z^q)\ad$ (which preserves entropy), hence the indices of the probability distribution got shifted forward by $(p,q)$. Now denote the last line in \eqref{eqn13456833} as $\sg_{AB,(p,q)}$, then
\begin{align}
\label{eqn13456834}
&\sum_{p,q}(1/d^2) \sg_{AB,(p,q)} \notag\\
& =  \sum_{m,n} (1/d^2) \sgt_X^{m}\ot \sg_Z^{n} (\EC^{\al\bt}_d \ot \IC) (\dya{\phi_0})(\sgt_X^{m}\ot \sg_Z^{n})\ad \notag\\
& \eqprop{S}{=}  \sum_{m,n} (1/d^2) \rho''_{AB,(m,n)} \, .
\end{align}
Now the concavity of entropy gives
\begin{align}
\label{eqn13456835}
S \Big[ \sum_{m,n} (1/d^2) \rho''_{AB,(m,n)} \Big] &= S \Big[\sum_{p,q}(1/d^2) \sg_{AB,(p,q)} \Big] \notag\\
&\geq  \sum_{p,q} (1/d^2) S [\sg_{AB,(p,q)} ] \notag\\
&= S \Big[\sum_{m,n} p_{(m,n)} \rho''_{AB,(m,n)} \Big] .
\end{align}
Therefore, we have 
\begin{align}
\label{eqn13456836}
&\chi ( \{p_{(m,n)}, \rho''_{AB,(m,n)} \} ) \notag\\
&\leq S \Big[ \sum_{m,n} (1/d^2) \rho''_{AB,(m,n)} \Big] - \sum_{m,n} p_{(m,n)}S[\rho''_{AB,(m,n)}]\notag\\
&= S \Big[ \sum_{m,n} (1/d^2) \rho''_{AB,(m,n)} \Big] - S(\rhoh_{AB})\notag\\
&=\chi ( \{1/d^2, \rho''_{AB,(m,n)} \} ),
\end{align}
which proves the desired result. Here we used the fact that the depolarising channel commutes with the encodings and hence $S [\rho''_{AB,(m,n)}] = S(\rhoh_{AB}), \forall (m,n)$, where $\rhoh_{AB} = (\EC^{\bt}_d \ot \IC)(\rho^{\al}_{AB})$.
\end{proof}

\section{Proof of Theorem~\ref{thmWerDep1}} \label{app2}

\begin{proof}
For a generic Pauli operator $\sg_W = \sum_k \om^k \dya{w_k}$, the action of a (uniformly) random unitary channel where the unitaries are the powers of $\sg_W$ can be rewritten as:
\begin{align}
\label{eqn1345683}
&(1/d)\sum_m \sg_W^m (\cdot) (\sg_W^m)\ad \notag\\
&= (1/d) \sum_{m,k,k'} \om^{(k-k')m}\dya{w_k} (\cdot) \dya{w_{k'}} \notag\\
&=\sum_{k} \dya{w_k} (\cdot) \dya{w_{k}}
\end{align}
since $(1/d)\sum_m \om^{(k-k')m} =\dl_{k,k'}$. Using this, we have (for the case of encoding with equal probabilities)
\begin{align}
&S(\rho''_{AB}) \notag\\
&= S \Big[ \EC_d^{\al\bt} \Big( (1/d^2)\sum_{m,n} \sgt_X^m\sg_Z^n \dya{\phi} (\sgt_X^m\sg_Z^n)\ad \Big) \Big]\notag\\ 
&= S\Big[ \EC_d^{\al\bt} \Big(\sum_{k,l} \ket{x_l}\ip{x_l}{k}\ip{k}{\phi}\ip{\phi}{k}\ip{k}{x_l}\bra{x_l} \Big)\Big] \notag\\ 
&= S\Big[ \EC_d^{\al\bt} \Big(\sum_{k,l} (c_{kl}/d) \dya{x_l} \ot \dya{w_k} \Big) \Big] \notag\\ 
&= S\Big[ \sum_{k,l} (\al\bt c_{kl}/d+(1-\al\bt)/d^2) \dya{x_l} \ot \dya{w_k}   \Big]\notag\\ 
\label{eqn1345688}&= H\Big[ \{ \frac{\al\bt c_{kl}}{d} +\frac{1-\al\bt }{d^2} \}_{kl} \Big].
\end{align}
For simplicity we dropped the identity operator on $B$ for operators, such as the $\EC_d^{\al\bt}$, that only act non-trivially on $A$. Here, the $\{\ket{w_k}\}$ form an orthonormal basis on $\HC_B$ since $\ket{\phi}$ is maximally entangled. Plugging \eqref{eqn1345688} into \eqref{eqnWerDepEqProb}, and using $S(\rho''_{AB,j}) = S(\rhoh_{AB}), \forall j$, proves the desired result.

\end{proof}

\section{Proof of Prop.~\ref{thm56}} \label{app3aaa}

In what follows, one can imagine rewriting the index as $j = (m,n)$ for Pauli-product encodings. The proof of Prop.~\ref{thm56} is aided by the following lemma.

\begin{lemma}
In the case of equal probabilities $p_j  =1/\ddd^2$ for each of the Pauli product unitaries $U_j=\sgt_X^m \sg_Z^n$, 
\begin{equation}
\label{eqn13}
\rho'_{AB} \leq c \left(\id \ot \rho_B\right).
\end{equation}\end{lemma}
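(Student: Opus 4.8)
The plan is to evaluate $\rho'_{AB}$ in closed form for the equal-probability ensemble and then compare it, block by block in the $\sgt_X$ eigenbasis $\{\ket{x_l}\}$, against $c(\id\ot\rho_B)$. First I would split the conjugation as $(\sgt_X^m\sg_Z^n\ot\id)\rho_{AB}(\sgt_X^m\sg_Z^n\ot\id)\ad = (\sgt_X^m\ot\id)\big[(\sg_Z^n\ot\id)\rho_{AB}(\sg_Z^n\ot\id)\ad\big](\sgt_X^m\ot\id)\ad$, so that the uniform average $(1/d^2)\sum_{m,n}$ factors into an inner uniform average over powers of $\sg_Z$ followed by an outer one over powers of $\sgt_X$, each acting only on $A$. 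Applying the twirl identity \eqref{eqn1345683} with $\sg_W=\sg_Z$ (so $\ket{w_k}=\ket{k}$) to the inner sum gives $(1/d)\sum_n(\sg_Z^n\ot\id)\rho_{AB}(\sg_Z^n\ot\id)\ad = \sum_k \dya{k}\ot\tau_k$, where $\tau_k := \bra{k}\rho_{AB}\ket{k}$, understood as the partial matrix element on the $A$ factor, is a positive semidefinite operator on $\HC_B$ obeying $\sum_k\tau_k = \Tr_A(\rho_{AB}) = \rho_B$. Applying \eqref{eqn1345683} a second time with $\sg_W=\sgt_X$ (so $\ket{w_k}=\ket{x_k}$) to the outer sum, and using $\ip{x_l}{k}\ip{k}{x_l} = |\ip{k}{x_l}|^2 = c_{kl}$, yields the closed form
\begin{equation}
\rho'_{AB} = \sum_{l}\dya{x_l}\ot\Big(\sum_k c_{kl}\,\tau_k\Big).
\end{equation}

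Next I would bound each block: for every $l$, since $0\leq c_{kl}\leq c$ and $\tau_k\geq 0$, we have $\sum_k c_{kl}\,\tau_k \leq c\sum_k\tau_k = c\,\rho_B$ as operators on $\HC_B$. Because the $\dya{x_l}$ are mutually orthogonal projectors with $\sum_l\dya{x_l}=\id$, the difference $c(\id\ot\rho_B)-\rho'_{AB} = \sum_l \dya{x_l}\ot\big(c\,\rho_B-\sum_k c_{kl}\,\tau_k\big)$ is block diagonal with positive semidefinite blocks, hence positive semidefinite. This is exactly \eqref{eqn13}.

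I expect no genuine obstacle here; the only care needed is in the bookkeeping — getting the operator ordering right when factoring the $\sgt_X^m\sg_Z^n$ conjugation, and noting that the blockwise inequality $c\,\rho_B-\sum_k c_{kl}\,\tau_k\geq 0$ lifts to a global operator inequality precisely because the block structure on $A$ given by the $\dya{x_l}$ is orthogonal and resolves the identity. The real content is the identification $\sum_k\tau_k = \rho_B$ together with the entrywise bound $c_{kl}\leq c$ that defines the complementarity parameter $c$.
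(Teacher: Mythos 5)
Your proof is correct and follows essentially the same route as the paper: both apply the Pauli-twirl identity \eqref{eqn1345683} to turn the uniform average over $\{\sgt_X^m\sg_Z^n\}$ into a double dephasing, arrive at the closed form $\rho'_{AB}=\sum_{k,l}c_{kl}\,\dya{x_l}\ot\Tr_A[(\dya{k}\ot\id)\rho_{AB}]$, and then use the entrywise bound $c_{kl}\leq c$ together with $\sum_l\dya{x_l}=\id$ and $\sum_k\Tr_A[(\dya{k}\ot\id)\rho_{AB}]=\rho_B$. Your blockwise phrasing of the final operator inequality is just a slightly more explicit packaging of the same step.
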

\begin{proof}
In Eq.~\eqref{eqn1345683} we noted that a random unitary channel of Pauli operators can be rewritten as a dephasing channel. So we have
\begin{align}
\label{eqn142398}
 \rho'_{AB} &= \frac{1}{\ddd^2}\sum_{m,n} \left(\sgt_X^m \sg_Z^n \ot \id \right) \rho_{AB}\left( (\sg_Z^n)\ad (\sgt_X^m)\ad \ot \id \right)\notag\\
 &= \sum_{k,l}\left( \ket{x_l}\ip{x_l}{k}\bra{k} \ot \id \right) \rho_{AB} \left(\ket{k}\ip{k}{x_l}\bra{x_l} \ot \id \right) \notag\\
 &= \sum_{k,l} |\ip{x_l}{k}|^2 \dya{x_l}\ot \Tr_A\left[ (\dya{k}\ot \id) \rho_{AB}\right] \notag\\
 &\leq c \sum_{k,l}  \dya{x_l}\ot \Tr_A\left[ (\dya{k}\ot \id) \rho_{AB}\right] \notag\\
 &= c \left(\id \ot \rho_B\right).
\end{align}\end{proof}

Now we give the proof of Prop.~\ref{thm56}.

\begin{proof}
Since the entropy is invariant to unitaries we have $S(\rho'_{AB,j}) = S(\rho_{AB})$ and hence
\begin{align}
\label{eqn11}
\CC(\{U_j\},\rho_{AB},\IC) & =\max_{ \{p_j\} } S(\rho'_{AB})-S(\rho_{AB}).
\end{align}
We obtain a lower bound using equal probabilities $p_j = 1/\ddd^2$ for each of the unitaries $U_j=\sgt_X^m \sg_Z^n$. Since the $\log$ is an operator monotone, then from \eqref{eqn13} we have
\begin{align}
S(\rho'_{AB})&\geq - \Tr_{AB}\left[ \rho'_{AB} \log [c \left(\id \ot \rho_B\right)] \right] \notag\\
&= -\log c  - \Tr_B (\rho_B \log \rho_B).
\end{align}
Combining this with \eqref{eqn11} proves \eqref{eqn15}.

Now Eq.~\eqref{eqn15} allows us to give an alternative proof of \eqref{eqn10}. Following Bowen \cite{BowenPRA.63.022302}, we upper bound the capacity using the subadditivity of von Neumann entropy:
\begin{align}
\label{eqn12223412}
\CC(\{U_j\},\rho_{AB},\IC)&\leq \max_{ \{p_j\} }S(\rho'_A) +S(\rho_B) -S(\rho_{AB})\notag\\
&\leq \log \ddd - S(A|B)_{\rho}.
\end{align}
By setting $c = 1/d$ in \eqref{eqn15}, we see that the upper bound in \eqref{eqn12223412} is achievable, proving \eqref{eqn10}.
\end{proof}

\section{Proof of Prop.~\ref{prop5}} \label{app3}

\begin{proof}
Any MES can be written as $\ket{\phi} = (\id \ot U_B)\ket{\phi_0}$; see the discussion surrounding Eq.~\eqref{eqnMESprop1}. Thus Alice encoding with unitary $U_j$ results in the state $\rho'_{AB,j} = (U_j \ot \id)\rho^{\al}_{AB}(U_j\ad \ot \id) = (\id \ot \tilde{U}) \sg^{\al}_{AB}(\id \ot \tilde{U}\ad)$, where $\sg^{\al}_{AB} = \al \dya{\phi_0} +(1-\al) \id / \ddd^2$ and $\tilde{U} = U_BU_j^T$ is a unitary. Since the action of this unitary $\tilde{U}$ on $B$ commutes with the action of the channel $\EC$ on $A$, the entropy of $\rho''_{AB,j} =( \EC \ot \IC)(\rho'_{AB,j})$ is given by $S(\rho''_{AB,j}) = S[(\EC\ot \IC)(\sg^{\al}_{AB} )] = S(\rhoh_{AB})$, where we used the invariance of the entropy under unitaries first for $\tilde{U}$ and then for $U_B$. Thus, we can write the capacity as
\begin{align}
\label{eqn21}
\CC(\{U_j\}, \rho^{\al}_{AB},\EC)&= \max_{ \{p_j\} } [S(\rho''_{AB})] - S(\rhoh_{AB}).
\end{align}

We obtain an upper bound using subadditivity:
\begin{align}
\label{eqn22}
\CC(\{U_j\}, \rho^{\al}_{AB},\EC)&\leq \max_{ \{p_j\} } [S(\rho''_{A})] +S(\rho_B) - S(\rhoh_{AB})\notag\\
&=  S(\EC(\id/\ddd)) +S(\rho_B) - S(\rhoh_{AB})\notag\\
&=  \log \ddd + S(\rhoh_A) - S(\rhoh_{AB})
\end{align}
since $\rho''_{A}  = \EC(\id/\ddd) = \rhoh_A$ and $\rho_B = \id/\ddd$. 

We obtain a lower bound by setting all probabilities equal to $1/\ddd^2$ for the unitaries $\{\sgt_X^m \sg_Z^n\}$. In this case, Eq.~\eqref{eqn13} implies that $\rho''_{AB}\leq c \cdot\EC(\id)\ot \rho_B$, since $\EC$ preserves positivity. Thus, 
\begin{align}
\label{eqn23bdbd}
S(\rho''_{AB})&\geq -\Tr_{AB}\left[ \rho''_{AB} \log [c \cdot \EC(\id) \ot \rho_B] \right]\notag\\
 &=- \log (\ddd\cdot c)+S(\EC(\id/\ddd)) +S(\rho_B) \notag\\
 &= \log (1/c)+S(\rhoh_A).
\end{align}
Plugging \eqref{eqn23bdbd} into \eqref{eqn21} proves \eqref{eqn26}. Now \eqref{eqn26} implies that the upper bound in \eqref{eqn22} is achievable by setting $c = 1/d$, which proves \eqref{eqn25}.
\end{proof}

\section{Proof of Thm.~\ref{thm58}} \label{app4}

\begin{proof}
To obtain a lower bound, we set all probabilities equal $p_j = 1/\ddd^2$ and use $\rho''_{AB}\leq c \cdot\EC(\id)\ot \rho_B$ from \eqref{eqn13}, which implies that 
\begin{align}
\label{eqn31}S(\rho''_{AB})&\geq -\Tr[\rho''_{AB}\log(c\cdot \EC(\id)\ot \rho_B)] \notag\\
&= S(\rho_B)-\log c -\Tr[\rho''_{A}\log \EC(\id)].
\end{align}
Noting that $S(\rho''_{A}) + D(\rho''_{A} || \EC(\id) ) = -\Tr[\rho''_{A}\log \EC(\id)] $ and inserting \eqref{eqn31} into \eqref{eqn7} gives the desired result, \eqref{eqn29}.
\end{proof}

\end{document}